\newtheorem{theorem}{Theorem}
\newtheorem{lemma}{Lemma}
\newtheorem{definition}{Definition}
\newtheorem{example}{Example}
\newtheorem*{problem*}{Problem}
\newtheorem{problem}{Problem}
\newtheorem{statement}{Statement}
\newtheorem*{statement*}{Statement}
\theoremstyle{definition}
\newtheorem*{superior*}{Superior Case}
\newtheorem*{inferior*}{Inferior Case}
\DeclareMathOperator*{\argmax}{arg\,max}
\DeclareMathOperator*{\argmin}{arg\,min}
\newcommand{\diff}{\color{black}}
\newcommand{\mypath}{\lambda}       
\newcommand{\pv}{\pi}               
\newcommand{\eseq}{\varepsilon}     
\newcommand{\chain}{\lambda}    
\begin{document}

\title{Multi-Path Bound for DAG Tasks}

\author{Qingqiang~He,
        Nan~Guan*,
        Shuai~Zhao,
        and Mingsong~Lv
\thanks{Qingqiang He is with the School of Computing and Information Technology,
Great Bay University, China. E-mail: heqq@gbu.edu.cn.}
\thanks{Nan Guan is with the Department of Computer Science, City University of Hong Kong, China. E-mail: nanguan@cityu.edu.hk.}
\thanks{Shuai Zhao is with the School of Computer Science and Engineering, Sun Yat-sen University, China. E-mail: zhaosh56@mail.sysu.edu.cn.}
\thanks{Mingsong Lv is with the Department of Computing, The Hong Kong Polytechnic University, China. E-mail: mingsong.lyu@polyu.edu.hk.}
\thanks{Manuscript received XX XX, 202X; revised XX XX, 202X. \emph{(*Corresponding author: Nan Guan.)}}
}

\markboth{Journal of \LaTeX\ Class Files,~Vol.~XX, No.~XX, XX~202X}%
{He \MakeLowercase{\textit{et al.}}: Multi-Path Bound for DAG Tasks}


\maketitle

\begin{abstract}
This paper studies the response time bound of a DAG (directed acyclic graph) task.
Recently, the idea of using multiple paths to bound the response time of a DAG task, instead of using a single longest path in previous results, was proposed and led to the so-called \emph{multi-path bound}.
Multi-path bounds can greatly reduce the response time bound and significantly improve the schedulability of DAG tasks.
This paper derives a new multi-path bound and proposes an optimal algorithm to compute this bound.
We further present a systematic analysis on the dominance and the sustainability of three existing multi-path bounds and the proposed multi-path bound.
Our bound theoretically dominates and empirically outperforms all existing multi-path bounds.
What's more, the proposed bound is the only multi-path bound that is proved to be self-sustainable.
\end{abstract}

\begin{IEEEkeywords}
multi-path bound, DAG task, response time bound, real-time scheduling
\end{IEEEkeywords}

\section{Introduction}
\label{sec:intro}
\IEEEPARstart{A}{s} multi-cores are becoming the mainstream of real-time systems for performance and energy efficiency, real-time applications, such as those in automotive, avionics and industrial domains, tend to be more complex to realize their functionalities.
The DAG (directed acyclic graph) task model is widely used to represent the complex structures of parallel real-time tasks.
For example, in the autonomous driving system, the processing chain from perception to control can be modeled as a sporadic DAG task \cite{verucchi2020latency, liu2021real}.
A large body of research works on real-time scheduling and analysis of DAG tasks have been proposed in recent years \cite{li2014analysis, casini2018partitioned, nasri2019response, voudouris2021federated, osborne2022minimizing, dai2022response, baruah2022ilp}, where a fundamental problem is how to upper-bound the response time of a DAG task executing on a parallel computing platform.
{\diff Response time bound is crucial to guarantee the correctness of time-critical embedded systems, where deadline misses may cause catastrophic consequences \cite{laplante2004real}.
}

Traditionally, researchers utilize the total workload and a single longest path to upper-bound the response time of a DAG task, such as the response time bounds in \cite{graham1969bounds, melani2015response, sun2021calculating, sun2017real, jiang2017semi, han2019response}.
These bounds generally assume that vertices not in the longest path do not execute in parallel with the longest path.
However, in real executions, many vertices not in the longest path can actually execute in parallel with the execution of the longest path.
Therefore, these bounds that rely on a single longest path are rather pessimistic in most cases.
Recently, works that utilize the total workload and multiple long paths to upper-bound the response time of a DAG task were proposed \cite{he2022bounding, he2023degree, ueter2023parallel}.
We call these bounds that use multiple paths of the DAG task as \emph{multi-path bounds}.
In contrast to bounds that rely on a single longest path, multi-path bounds, which utilize the information of multiple parallel paths to analyze the execution behavior of DAG tasks, can inherently leverage the parallel power of multi-cores.
Multi-path bounds can greatly reduce the response time bound of a DAG task and significantly improve the schedulability of DAG tasks \cite{he2022bounding, he2023degree, ueter2023parallel}.

This paper derives a new multi-path bound.
Computing a multi-path bound needs a list of paths (called generalized path list) in the DAG task.
The existing multi-path bound in \cite{he2022bounding} has the constraint that its generalized path list must include the longest path of the DAG task.
The underlying insight of multi-path bounds is that the parallel computing of multi-cores leads to the parallel execution of multiple paths in the DAG task; this phenomenon, in turn, is exploited to reduce the response time bound in the multi-path based analysis method.
It feels natural that these parallel-executing multiple paths should have nothing to do with whether they include the longest path or not.
In this paper, we lift this constraint by generalizing the concepts and lemmas of \cite{he2022bounding}, thus allowing an arbitrary generalized path list for the computation of the proposed bound, making it the most elegant and exquisite multi-path bound ever since.
Lifting this constraint also allows us to search for an optimal generalized path list such that the multi-path bound can be minimized ({\diff the example in Fig. \ref{fig:ill_nested} and Section \ref{sec:problem} illustrates the intuition of why lifting the constraint of the longest path can reduce the response time bound}).
This paper proposes an optimal algorithm for computing the generalized path list through a novel reduction to the minimum-cost flow problem~\cite{ahuja1995network}.

This paper further presents a thorough analysis on the dominance among three existing multi-path bounds \cite{he2022bounding, he2023degree, ueter2023parallel} and the proposed multi-path bound, which can serve as the guidance for practitioners choosing these multi-path bounds.
We show that the proposed bound dominates all three existing multi-path bounds and Graham's bound \cite{graham1969bounds}; the three existing multi-path bounds do not dominate each other.
This paper also investigates the self-sustainability of multi-path bounds.
Concerning the problem considered here, self-sustainability intuitively means that the response time bound of a DAG task with smaller WCETs (worst-case execution time) is no larger than the response time bound of the same DAG task but with larger WCETs.
Self-sustainability is an important aspect for schedulability tests and response time bounds in real-time scheduling \cite{baker2009sustainable}, and is particularly critical for the design and dynamic updates of real-time systems \cite{yi2019design}.
We show that the proposed bound is the only multi-path bound that is proved to be self-sustainable; the bounds in \cite{he2022bounding, ueter2023parallel} are not self-sustainable and the self-sustainability of the bound in \cite{he2023degree} is still open.

Experiments demonstrate that the proposed bound has the best performance among all existing multi-path bounds and the schedulability of DAG tasks based on the proposed multi-path bound outperforms the state-of-the-art DAG scheduling approaches by a large margin.
In summary, this paper presents four contributions.
\begin{itemize}
  \item A new multi-path bound is proposed (Section \ref{sec:bound}).
  \item An optimal algorithm is provided for computing the proposed bound (Section \ref{sec:computation}).
  \item The dominance among multi-path bounds is analyzed: our bound dominates all three existing multi-path bounds and Graham's bound (Section \ref{sec:dominance}).
  \item The sustainability of multi-path bounds is investigated: our bound is the only multi-path bound that is proved to be self-sustainable (Section \ref{sec:sustain}).
\end{itemize}

\section{Related Work}
\label{sec:related}
In \cite{graham1969bounds}, Graham developed a classic response time bound using the total workload and the length of the longest path in a DAG task.
Graham's bound assumes a DAG task executing on an identical multi-core platform under a work-conserving scheduler, which is also the assumption of this paper.

Over the years, researchers sought to improve Graham's bound in mainly three directions.
The first direction is to change the task model of Graham's bound (i.e. the DAG task model).
Some works extended Graham's bound to other task models, such as the conditional DAG task model \cite{melani2015response}, graph model with loop structures \cite{sun2021calculating}, graph models of OpenMP workload \cite{serrano2015timing, sun2017real}, and DAG models with mutually exclusive executions \cite{liang2023response}.
The second direction is to change the computing model of Graham's bound (i.e. the identical multi-core platform).
Graham's bound has been adapted to uniform \cite{jiang2017semi}, heterogeneous \cite{han2019response, lin2022type} and unrelated \cite{voudouris2021bounding} multi-core platforms.
The third direction is to change the scheduling model of Graham's bound (i.e. the work-conserving scheduler).
\cite{he2019intra, zhao2022dag, he2024longer} improved Graham's bound by enforcing priority orders or precedence constraints among vertices, so their results are not general to all work-conserving scheduling algorithms.
\cite{voudouris2017timing, chen2019timing} developed scheduling algorithms based on statically assigned vertex execution orders, which are no longer work-conserving.
In summary, none of these works improves Graham's bound under the same setting (i.e. the DAG task model, the identical multi-core platform and the work-conserving scheduler) as the original work \cite{graham1969bounds}.
In other words, all of these bounds degrade to Graham's bound under the setting of \cite{graham1969bounds}.

Under the same setting as \cite{graham1969bounds}, \cite{he2022bounding} presented the first improvement over Graham's bound using the total workload and the lengths of multiple paths in the DAG task, instead of the longest path in Graham's bound.
The multi-path bound in \cite{he2022bounding} significantly reduces the response time bound and improves the schedulability of DAG tasks. After \cite{he2022bounding}, following the idea of using multiple paths to bound the response time, multi-path bounds in \cite{he2023degree, ueter2023parallel} were proposed.
\cite{he2023degree} utilized the degree of parallelism to derive its bound and to identify the multiple paths for computing its bound.
\cite{ueter2023parallel} utilized vertex-level priorities to help derive its bound. Vertex-level priority simplified the derivation of the multi-path bound but complicated the scheduling of the DAG task.

\section{System Model}
\label{sec:model}
\subsection{Task Model}
\label{sec:dag}

\begin{figure}[t]
\centering
\subfloat[]{
    \includegraphics[width=0.35\linewidth]{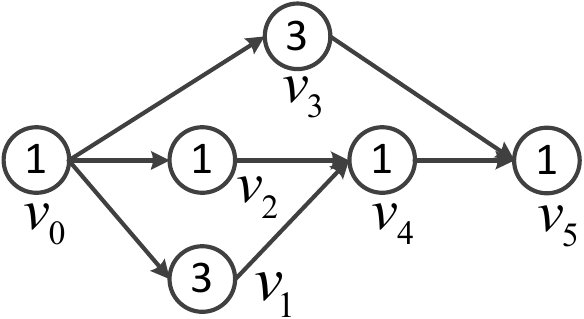}
    \label{fig:dag}
}
\hfil
\subfloat[]{
    \includegraphics[width=0.41\linewidth]{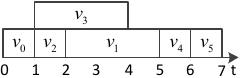}
    \label{fig:sequence}
}
\caption{(a) A DAG task example. (b) An execution sequence of Fig. \ref{fig:dag}. }
\label{fig:example}
\end{figure}

A parallel real-time task is modeled as a directed acyclic graph $G = (V, E)$, where $V$ is the set of vertices and $E\subseteq  V \times V $ is the set of edges.
Each vertex $v\in V$ represents a piece of sequentially executed workload and has a WCET (worst-case execution time) $c(v)$.
An edge $(v_i, v_j)\in E$ represents the precedence constraint between $v_i$ and $v_j$, which means that $v_j$ can only start its execution after $v_i$ completes its execution.
A vertex with no incoming edges is called a \emph{source vertex} and a vertex with no outgoing edges is called a \emph{sink vertex}.
Without loss of generality, we assume that $G$ has exactly one source  (denoted as $v_{src}$) and one sink (denoted as $v_{snk}$).
If $G$ has multiple source or sink vertices, we add a dummy source or sink vertex with zero WCET to comply with the assumption.

A \emph{path} $\mypath$ is a set of vertices $(\pv_0, \cdots, \pv_k)$ such that $\forall i\in [0,k-1]$: $(\pv_i, \pv_{i+1})\in E$.
The length of a path is the total workload in this path and is defined as $len(\mypath) \coloneqq \sum_{\pv_i\in \mypath} c(\pv_i)$.
A \emph{complete path} is a path starting from the source vertex and ending at the sink vertex.
The \emph{longest path} is a complete path with the largest length among all paths of $G$.
For a vertex set $U \subseteq V$, we define $vol(U) \coloneqq \sum_{v\in U} c(v)$.
The \emph{length} of the longest path in $G$ is denoted as $len(G)$.
The \emph{volume} of $G$ is the total workload in $G$ and is defined as $vol(G) \coloneqq \sum_{v\in V} c(v)$.

If there is an edge $(u,v)$, we say that $u$ is a \emph{predecessor} of $v$, and $v$ is a \emph{successor} of $u$.
If there is a path starting from $u$ and ending at $v$, we say that $u$ is an \emph{ancestor} of $v$ and $v$ is a \emph{descendant} of $u$.
The sets of predecessors, successors, ancestors and descendants of $v$ are denoted as $pre(v)$, $suc(v)$, $anc(v)$ and $des(v)$, respectively.
A \emph{generalized path} $\chain=(\pv_0, \cdots, \pv_k)$ is a set of vertices such that $\forall i\in [0,k-1]$: $\pv_i$ is an ancestor of $\pv_{i+1}$.
By definition, a path is a generalized path, but a generalized path is not necessarily a path.
Similar to paths, the length of a generalized path $\chain$ is defined as $len(\chain) \coloneqq \sum_{\pv_i\in \chain} c(\pv_i)$.

\subsection{Scheduling Model}
\label{sec:schedule}
The task $G$ is scheduled to execute on a computing platform with $m$ identical cores.
A vertex $v$ is said to be \emph{eligible} if all its predecessors have finished execution and thus $v$ can be immediately executed when there are available cores.
For a scheduling algorithm, the \emph{work-conserving} property means that an eligible vertex must be executed if there are available cores. We do not restrict the scheduling algorithm, as long as it satisfies the work-conserving property.

At runtime, vertices of $G$ execute at certain time points on certain cores under the decision of the scheduling algorithm.
An \emph{execution sequence} of $G$ describes which vertex executes on which core at every time point.
In an execution sequence, the \emph{start time} $s(v)$ and \emph{finish time} $f(v)$ are the time point when $v$ first starts its execution and completes its execution, respectively.
Without loss of generality, we assume the source vertex of $G$ starts its execution at time $0$.
The \emph{response time} $R$ of $G$ in an execution sequence is defined to be the time when the sink vertex finishes its execution, i.e. $f(v_{snk})$.

\begin{example}\label{exp:dag}
Fig. \ref{fig:dag} shows a DAG task $G$ and Fig. \ref{fig:sequence} shows a possible execution sequence of $G$ under a work-conserving scheduler.
The number inside vertices is the WCET of vertices.
The longest path is $\mypath=(v_0, v_1, v_4, v_5)$, and $len(G) = len(\mypath)=6$.
The volume of the DAG is $vol(G)=10$.
$\chain=(v_0, v_2, v_5)$ is a generalized path. Note that by definition, $\chain$ is not a path, because $(v_2, v_5)$ is not an edge in the DAG.
\end{example}


\section{New Multi-Path Bound}
\label{sec:bound}
\begin{table}[t]
\centering
\caption{{\diff Notations Used in Section \ref{sec:bound} }}
\label{tab:notation}
\begin{tabular}{ll}
\hline
\textbf{Notation}   & \textbf{Description} \\
\hline
$\varepsilon$   & an execution sequence \\
$s(v)$          & the start time of vertex $v$ \\
$f(v)$          & the finish time of vertex $v$ \\
$(P_i)_0^{k}$   & the set of cores $(P_0, P_1, \cdots, P_{k})$ \\

\hline
$\mypath^*$             & the critical path \\
$\mypath^+$             & the restricted critical path \\
$\omega$, $\eta$        & a virtual path \\
$len(\omega)$           & the length of virtual path $\omega$ \\
$(\omega_i)_0^k$        & a virtual path list \\
$(\mypath_i)_0^k$       & a generalized path list \\

\hline
$U_\varepsilon$        & the projection of vertex set $U$ regarding $\varepsilon$ \\
$\mypath_\varepsilon$  & the projection of path $\mypath$ regarding $\varepsilon$  \\
$\Delta(U)$            & the workload reduction of vertex set $U$ in $\varepsilon$ \\
\hline
\end{tabular}
\end{table}

This section presents a new response time bound for a DAG task using multiple long paths. The new multi-path bound eliminates the constraint of the longest path and does not depend on vertex-level priorities, making it the most elegant and exquisite multi-path bound ever since.

The derivation of the proposed bound is achieved by generalizing the concepts and lemmas in \cite{he2022bounding}.
Due to the page limits, we cannot include the full proof in this paper.
Since the proof for our bound shares most of the abstractions and concepts with the proof of \cite{he2022bounding}, we only highlight the differences between the two proofs in this section\footnote{For the full proof, please refer to the supplementary material attached in the end of this paper.}.
{\diff Table~\ref{tab:notation} summarizes the notations used in this section; the precise definitions of these notations are in the supplementary material.}
Although we tried our best to be as clear as possible, Section~\ref{sec:bound} is not self-contained.

Our method requires multiple long paths to compute the proposed bound. These multiple long paths are formally characterized by a \emph{generalized path list} in Definition \ref{def:path_list}.
\begin{definition}[Generalized Path List]\label{def:path_list}
A generalized path list is a set of disjoint generalized paths $(\chain_i)_0^k$ ($k \ge 0$), i.e.,
$$\forall i, j \in [0, k],\ \chain_i \cap \chain_j = \varnothing$$
\end{definition}
In the above definition , $(\chain_i)_0^k$ is the compact representation of $(\chain_0, \cdots, \chain_k)$.
\begin{example}\label{exp:path_list}
For the DAG task in Fig. \ref{fig:dag}, $(\chain_i)_0^1$ with $\chain_0=(v_0, v_1, v_4, v_5)$ and $\chain_1=(v_3)$ is a generalized path list, where the first generalized path $\chain_0$ is the longest path.
As an another example, $(\chain_i)_0^2$ with $\chain_0=(v_0, v_1)$, $\chain_1=(v_2, v_4)$ and $\chain_2=(v_3, v_5)$ is also a generalized path list.
\end{example}

\textbf{The first difference} between the two proofs lies in the generalization of Lemma 4 of \cite{he2022bounding}.
For an execution sequence, both our proof and the proof in \cite{he2022bounding} conduct analysis on the workload situated in the time interval during which the generalized path list is executing.
Since the generalized path list used in \cite{he2022bounding} requires the first generalized path to be the longest path of the DAG task (note that the longest path starts from the source vertex and ends at the sink vertex), \cite{he2022bounding} analyzes the workload in time interval $[s(v_{src}), f(v_{snk})]$, i.e. $[0, f(v_{snk})]$.
However, our method does not have this requirement. Next, we introduce notations to define the time interval during which the generalized path list is executing.

For an execution sequence $\varepsilon$ and a generalized path list $(\chain_i)_0^k$, $k \in [0, m-1]$, we define
\begin{equation}\label{equ:first_list}
v_\emph{fst} = \argmin_{u\in (\chain_i)_0^k} \{s(u)\}
\end{equation}
\begin{equation}\label{equ:last_list}
v_\emph{lst} = \argmax_{u\in (\chain_i)_0^k} \{f(u)\}
\end{equation}
Intuitively, for vertices of this generalized path list, $v_\emph{fst}$ is the first vertex to start its execution and $v_\emph{lst}$ is the last vertex to finish its execution in $\varepsilon$.

\begin{lemma}[Corresponding to Lemma 4 of \cite{he2022bounding}]
\label{lem:restricted}
$(\mypath_i)_0^k$, $k \in [0, m-1]$, is a generalized path list.
$\varepsilon$ is a regular execution sequence regarding $(\mypath_i)_0^k$.
$\mypath^+$ is the restricted critical path of $(\mypath_i)_0^k$ in $\varepsilon$.
$\mypath^+_\varepsilon$ is the projection of $\mypath^+$ in $\varepsilon$.
There exists a virtual path $\eta$ in $\varepsilon$ satisfying all the following three conditions.
\begin{enumerate}
  \item [(i)] $\forall v \in \eta$, $v \notin (\mypath_i)_0^k$;
  \item [(ii)] $\forall v \in \eta$, $v$ executes on $(P_i)_0^{k}$;
  \item [(iii)] $len(\mypath^+_\varepsilon)+ len(\eta)=f(v_\text{lst})-s(v_\text{fst})$.
\end{enumerate}
\end{lemma}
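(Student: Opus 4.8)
The plan is to adapt the proof of Lemma~4 of \cite{he2022bounding} to the \emph{localized} time interval $[s(v_{\text{fst}}), f(v_{\text{lst}})]$ fixed by \eqref{equ:first_list} and \eqref{equ:last_list}, rather than the global interval $[0, f(v_{snk})]$ that was available only because the first generalized path was forced to be the longest path. Concretely, I would build the virtual path $\eta$ by a backward sweep over this interval that alternates between the segments in which the restricted critical path $\mypath^+$ is executing and the complementary ``gap'' segments in which it is idle, and then show that the gap segments can be covered by a single virtual path that lives on the core set $(P_i)_0^{k}$ and is disjoint from the list $(\mypath_i)_0^k$.

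First I would fix the geometry of the interval. By \eqref{equ:first_list} and \eqref{equ:last_list}, every vertex of $(\mypath_i)_0^k$ executes inside $[s(v_{\text{fst}}), f(v_{\text{lst}})]$, whose total length is exactly $f(v_{\text{lst}}) - s(v_{\text{fst}})$. Since $\mypath^+$ is a restricted critical path of the list and $\varepsilon$ is regular with respect to it, its projection $\mypath^+_\varepsilon$ covers a union of non-overlapping subintervals of total length $len(\mypath^+_\varepsilon)$; the non-overlap follows because consecutive vertices of a generalized path are related by the ancestor relation, so a descendant cannot begin before its ancestor finishes. This splits $[s(v_{\text{fst}}), f(v_{\text{lst}})]$ into an \emph{active} part covered by $\mypath^+_\varepsilon$ and a complementary \emph{gap} part of total length $(f(v_{\text{lst}}) - s(v_{\text{fst}})) - len(\mypath^+_\varepsilon)$.

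The core step is the gap analysis. For each maximal gap lying between the finish of one vertex of $\mypath^+$ and the start of the next, I would argue that the later vertex is already eligible throughout the gap, so the work-conserving property forces every core in $(P_i)_0^{k}$ to be occupied; moreover, the regularity of $\varepsilon$ together with the defining property of the restricted critical path guarantees that the occupying vertices lie outside the list $(\mypath_i)_0^k$. I would then select, across all gaps, a chain of such vertices and concatenate them into the virtual path $\eta$, following the construction of \cite{he2022bounding}. This yields conditions (i) and (ii) directly, and condition (iii) follows by length accounting: the active segments contribute $len(\mypath^+_\varepsilon)$, the gap segments contribute $len(\eta)$, and together they tile the whole interval.

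The main obstacle is carrying the gap step into this un-anchored setting. When $\mypath^+$ starts at neither the source nor ends at the sink, I must verify that there is no idle core-time to account for outside $[s(v_{\text{fst}}), f(v_{\text{lst}})]$ (which is immediate from the very definitions of $v_{\text{fst}}$ and $v_{\text{lst}}$) and that the eligibility-plus-work-conservation argument still pins all of $(P_i)_0^{k}$ busy during each interior gap without appealing to a global source-to-sink chain. Establishing that the gap-filling vertices are simultaneously off the list and resident on the designated cores $(P_i)_0^{k}$, and that they stitch into a single valid virtual path whose length equals the total gap length, is the delicate part; it is precisely here that the regularity of $\varepsilon$ and the construction of $\mypath^+$ must be invoked, generalizing the corresponding argument of \cite{he2022bounding} essentially verbatim except for replacing the fixed endpoints $0$ and $f(v_{snk})$ by $s(v_{\text{fst}})$ and $f(v_{\text{lst}})$.
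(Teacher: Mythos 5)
Your proposal matches the paper's own treatment: the paper proves this lemma by taking the proof of Lemma~4 of \cite{he2022bounding} essentially verbatim and replacing the anchored interval $[0, f(v_{snk})]$ (available there only because the first path was forced to be the longest, source-to-sink path) with the localized interval $[s(v_\text{fst}), f(v_\text{lst})]$, explicitly noting that the underlying reasoning---the gap-filling argument in which work conservation and the construction of the restricted critical path keep the cores $(P_i)_0^{k}$ occupied by off-list vertices---is unchanged, so that only condition (iii) is modified. Your backward sweep, the eligibility-plus-work-conservation analysis of the gaps, and the length accounting that tiles the interval into $len(\mypath^+_\varepsilon)$ plus $len(\eta)$ are exactly this adaptation, so your approach is essentially the same as the paper's.
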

It is not required that the first generalized path is the longest path in our bound. So we focus on the workload in time interval $[s(v_\emph{lst}), f(v_\emph{fst})]$.
In Lemma \ref{lem:restricted}, Condition (iii) is modified compared to \cite{he2022bounding}, but the underlying reasoning is unchanged.

\textbf{The second difference} is in Lemma 8 and Lemma 9 of \cite{he2022bounding}.
For the execution sequence under analysis, the execution of the first generalized path is important for the proof.
In the following, regarding a generalized path list $(\mypath_i)_0^k$, we also use $\mypath$ to denote $\mypath_0$ for conciseness.
The definitions of time interval $H$ and $Z$ are modified as follows.

For an execution sequence $\varepsilon$ and a generalized path list $(\mypath_i)_0^k$, $k \in [0, m-1]$, we define
\begin{itemize}
  \item $H$: time interval in $[s(v_\emph{fst}), f(v_\emph{lst})]$ during which $\exists \pv_i \in \mypath$, $\pv_i$ is executing;
  \item $Z$: time interval in $[s(v_\emph{fst}), f(v_\emph{lst})]$ during which $\forall \pv_i \in \mypath$, $\pv_i$ is \emph{not} executing.
\end{itemize}

In \cite{he2022bounding}, $H$ is defined to be the time interval where the longest path is executing.
Since in our method, the first generalized path does not have to be the longest path, $H$ is modified to be the time interval where the first generalized path in $(\mypath_i)_0^k$ is executing.
By the definition of $H$ and $Z$, we have
\begin{equation}\label{equ:hz}
|H|+|Z|=f(v_\emph{lst})-s(v_\emph{fst})
\end{equation}

In an execution sequence $\varepsilon$, for a vertex set $U$, let $\Delta(U)$ denote the amount of workload reduction of $U$ in $\varepsilon$. For example, in Fig. \ref{fig:dag}, let $U=\{v_1, v_2\}$. The total workload of $U$ is $vol(U)=4$. Suppose in an execution sequence $\varepsilon$, the execution times of $v_1$ and $v_2$ are 2 and 1, respectively. Then $\Delta(U)=4-(2+1)=1$.

\begin{lemma}[Corresponding to Lemma 8 of \cite{he2022bounding}]
\label{lem:workload_volume}
$vol(W') \ge \sum_{i=1}^{k} len(\mypath_i) -\Delta(V)- (len(G)-len(\mypath))$.
\end{lemma}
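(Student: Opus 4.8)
The plan is to lower-bound $vol(W')$ by a conservation argument over the active interval $[s(v_\emph{fst}), f(v_\emph{lst})]$, whose length is $|H|+|Z|$ by \eqref{equ:hz}, using the virtual path $\eta$ from Lemma~\ref{lem:restricted} as the structural backbone. The identity $len(\mypath^+_\varepsilon)+len(\eta)=f(v_\emph{lst})-s(v_\emph{fst})$ supplied by condition~(iii) pins the length of this interval to the projected restricted critical path and to $\eta$, and conditions~(i)--(ii) guarantee that $\eta$ avoids the path list and lives on the reference cores $(P_i)_0^{k}$. These are exactly the facts I will feed into the accounting.

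First I would account for the workload that the secondary paths are forced to contribute. Since $(\mypath_i)_0^k$ is a generalized path list, the paths $\mypath_1,\dots,\mypath_k$ are pairwise disjoint and disjoint from $\mypath=\mypath_0$, so their WCETs sum to $\sum_{i=1}^{k} len(\mypath_i)$ with no double counting. In $\varepsilon$ these vertices execute with possibly reduced times, and since the total reduction over all of $V$ is $\Delta(V)$, their executed workload is at least $\sum_{i=1}^{k} len(\mypath_i)-\Delta(V)$; this is the origin of the $-\Delta(V)$ term. I would then replay the work-conserving and projection argument behind Lemma~8 of \cite{he2022bounding} to charge this executed workload to $W'$: whenever a secondary-path vertex executes outside the slots occupied by $\mypath$ it is counted in $W'$, and whenever it would overlap $\mypath$ the overlap is absorbed by $\eta$, which by (i)--(ii) is disjoint from the list and confined to $(P_i)_0^{k}$.

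The one genuinely new ingredient, and the main obstacle, is the term $-(len(G)-len(\mypath))$. In \cite{he2022bounding} the first path is the longest path, so $len(\mypath)=len(G)$ and this deficit is silently zero; lifting that constraint means the backbone against which the restricted critical path is measured can fall short of the longest path by exactly $len(G)-len(\mypath)$. I would carry this deficit through the chain of inequalities that previously collapsed to zero and show it re-appears as a single additive slack subtracted from the lower bound, verifying that every place where \cite{he2022bounding} used $len(\mypath)=len(G)$ is repaired by subtracting this term and nothing more. Combining the executed-workload bound $\sum_{i=1}^{k} len(\mypath_i)-\Delta(V)$, the charging to $W'$, and this deficit correction then yields $vol(W') \ge \sum_{i=1}^{k} len(\mypath_i)-\Delta(V)-(len(G)-len(\mypath))$.
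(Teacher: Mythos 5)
Your proposal assembles the right ingredients but defers the one step that constitutes the genuinely new content of this lemma, and that deferral is a real gap. The paper's proof localizes the deficit term in a single inequality, namely (\ref{equ:workload_volume}): combining (\ref{equ:hz}) with Condition (iii) of Lemma \ref{lem:restricted} gives $len(\eta)-|Z| = |H|-len(\mypath^+_\varepsilon)$; because $\mypath$ is a generalized path, its vertices can never execute in parallel, so $|H|$ equals the executed workload of $\mypath$ in $\varepsilon$, which is at least $len(\mypath)-\Delta(\mypath)$; and because $\mypath^+$ is a complete path of $G$, its projection satisfies $len(\mypath^+_\varepsilon) \le len(\mypath^+) \le len(G)$. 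Together these yield $len(\eta)-|Z| \ge -\Delta(\mypath)-(len(G)-len(\mypath))$, i.e., the idle time of the first path obeys $|Z| \le len(\eta)+\Delta(\mypath)+(len(G)-len(\mypath))$, after which the remainder of the proof of Lemma 8 of \cite{he2022bounding} goes through unchanged. Your plan to ``carry this deficit through the chain of inequalities that previously collapsed to zero and show it re-appears as a single additive slack \dots and nothing more'' is exactly this obligation restated, not discharged: nothing in your sketch identifies that the mismatch is between $|H|$ (controlled by $len(\mypath)$) and $len(\mypath^+_\varepsilon)$ (controlled only by $len(G)$), which is the only place where \cite{he2022bounding} ever used the hypothesis $len(\mypath)=len(G)$.

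Two further inaccuracies would trip you up if you executed the sketch. First, you spend the entire $\Delta(V)$ budget on the secondary paths (claiming their executed workload is at least $\sum_{i=1}^{k} len(\mypath_i)-\Delta(V)$), but the first path's own reduction $\Delta(\mypath)$ also enters the bound, through $|H| \ge len(\mypath)-\Delta(\mypath)$ above; the final inequality needs $\Delta(\mypath)+\sum_{i=1}^{k}\Delta(\mypath_i) \le \Delta(V)$, which holds because the paths in the list are disjoint, but your accounting never reserves room for $\Delta(\mypath)$. Second, your charging rule ``whenever a secondary-path vertex would overlap $\mypath$ the overlap is absorbed by $\eta$'' contradicts Condition (i) of Lemma \ref{lem:restricted}: $\eta$ contains no vertex of $(\mypath_i)_0^k$, so it cannot absorb any execution of the path list. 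The role of $\eta$ is not to host such workload but to tie the interval length $f(v_\text{lst})-s(v_\text{fst})$ to $len(\mypath^+_\varepsilon)$, which is precisely what makes the bound on $|Z|$, and hence the lemma, go through.
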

Compared to \cite{he2022bounding}, the item $len(G)-len(\mypath)$ is added in Lemma \ref{lem:workload_volume}. This is also due to the fact that the first generalized path does not have to be the longest path in our method.
When the first generalized path $\mypath_0=\mypath$ is the longest path, $len(G)-len(\mypath)=0$, and Lemma \ref{lem:workload_volume} degrades to Lemma 8 of \cite{he2022bounding}.
For the proof of Lemma \ref{lem:workload_volume}, the major difference is (\ref{equ:workload_volume}).
\begin{equation}\label{equ:workload_volume}
len(\eta)-|Z| = \ge -\Delta(\mypath)-(len(G)-len(\mypath))
\end{equation}
Equation (\ref{equ:workload_volume}) is mainly a result of (\ref{equ:hz}) and Condition (iii) in Lemma \ref{lem:restricted}.

\begin{lemma}[Corresponding to Lemma 9 of \cite{he2022bounding}]
\label{lem:workload}
$\varepsilon$ is an execution sequence.
$(\mypath_i)_0^k$, $k \in [0, m-1]$, is a generalized path list. Also $\mypath \coloneqq \mypath_0$.
For any complete path $\mypath'$ of $G$, there is a virtual path list $(\omega_i)_0^k$ where $\omega_0=\mypath'_{\varepsilon}$, satisfying the following condition.
\begin{equation}\label{equ:workload}
\sum_{i=1}^{k} len(\omega_i) \ge \sum_{i=1}^{k} len(\mypath_i)-\Delta(V)-(len(G)-len(\mypath))
\end{equation}
\end{lemma}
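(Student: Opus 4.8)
The plan is to reuse the argument of the corresponding lemma in \cite{he2022bounding} essentially verbatim, isolating the one place where our relaxed assumption (that $\mypath_0=\mypath$ need not be the longest path) enters---namely the quantitative input supplied by Lemma \ref{lem:workload_volume}. The combinatorial construction that turns an execution sequence into the virtual paths $\omega_1,\dots,\omega_k$ never inspects whether $\mypath$ is the longest path, so that construction transfers unchanged; only the lower bound it is fed is altered by the additive term $len(G)-len(\mypath)$.

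Concretely, I would proceed in three steps. First, I invoke Lemma \ref{lem:workload_volume} to obtain
\[
vol(W') \ge \sum_{i=1}^{k} len(\mypath_i) - \Delta(V) - \big(len(G)-len(\mypath)\big),
\]
where $W'$ is the set of vertices lying outside the path list $(\mypath_i)_0^k$ that execute on the cores $(P_i)_0^{k}$ during the window $[s(v_\text{fst}),f(v_\text{lst})]$. Second, given the complete path $\mypath'$, I set $\omega_0 \coloneqq \mypath'_{\varepsilon}$ and apply the decomposition $\decomp$ of $\varepsilon$ (exactly as in \cite{he2022bounding}) to split $W'$ into $k$ virtual paths $\omega_1,\dots,\omega_k$: since the $W'$-vertices occupy only the $k$ cores $P_1,\dots,P_k$, tracing a time-ordered chain on each such core produces $k$ pairwise-disjoint virtual paths that are also disjoint from $\omega_0$, and the decomposition covers all of $W'$, so it meets the covering inequality $\sum_{i=1}^{k} len(\omega_i) \ge vol(W')$. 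Third, I chain the two inequalities to reach
\[
\sum_{i=1}^{k} len(\omega_i) \ge vol(W') \ge \sum_{i=1}^{k} len(\mypath_i) - \Delta(V) - \big(len(G)-len(\mypath)\big),
\]
which is precisely (\ref{equ:workload}); it then remains only to check that $(\omega_i)_0^{k}$ satisfies Definition \ref{def:path_list}, i.e. that the $\omega_i$ are pairwise disjoint.

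The main obstacle is the second step: confirming that the decomposition inherited from \cite{he2022bounding} stays valid once the first-path-is-longest hypothesis is dropped, and in particular that the covering inequality $\sum_{i=1}^{k} len(\omega_i)\ge vol(W')$ still holds. The subtle point is that each $\omega_i$ must be a genuine virtual path---an ancestry-respecting chain disjoint from the path list and from $\omega_0$---rather than a mere per-core grouping of $W'$; this is exactly where the structural guarantees of Lemma \ref{lem:restricted}, especially the filling path $\achain$ and its timing identity (iii), are used to align the traced chains with the window $[s(v_\text{fst}),f(v_\text{lst})]$ and to rule out double counting. Once this is in place, the new term $len(G)-len(\mypath)$ enters only through Lemma \ref{lem:workload_volume} and propagates transparently to the final bound, so no further change to the \cite{he2022bounding} argument is needed.
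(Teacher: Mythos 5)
Your proposal takes essentially the same route as the paper: the paper also treats the modification in Lemma~\ref{lem:workload} as a direct consequence of the modified bound in Lemma~\ref{lem:workload_volume}, reusing the virtual-path construction of \cite{he2022bounding} unchanged (anchored to the window $[s(v_\text{fst}), f(v_\text{lst})]$ via Lemma~\ref{lem:restricted}) so that the extra term $len(G)-len(\mypath)$ simply propagates into (\ref{equ:workload}). Your identification of where the relaxed longest-path assumption enters, and of the covering inequality as the only step needing re-verification, is exactly the paper's reasoning.
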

Same as Lemma \ref{lem:workload_volume}, compared to \cite{he2022bounding}, the item $len(G)-len(\mypath)$ is also added in (\ref{equ:workload}).
When the first generalized path in $(\mypath_i)_0^k$ is the longest path of $G$, we have $len(\mypath)=len(G)$. Equation (\ref{equ:workload}) degrades to
\begin{equation}\label{equ:workload1}
\sum_{i=1}^{k} len(\omega_i) \ge \sum_{i=1}^{k} len(\mypath_i)-\Delta(V)
\end{equation}
which is the equation in Lemma 9 of \cite{he2022bounding}.
With respect to the corresponding lemmas in \cite{he2022bounding}, the modification in Lemma \ref{lem:workload} is a direct result of the modification in Lemma \ref{lem:workload_volume}.

\textbf{The Third difference} is in Lemma 10 of \cite{he2022bounding}.
With the modifications in Lemmas 1-3, we are ready to state Lemma \ref{lem:dag_bound}, which is an important lemma for deriving the proposed bound.

\begin{lemma}[Corresponding to Lemma 10 of \cite{he2022bounding}]
\label{lem:dag_bound}
Given a generalized path list $(\mypath_i)_0^k$ ($k \in [0, m-1]$),
the response time $R$ of DAG $G$ scheduled by work-conserving scheduling on $m$ cores is bounded by:
\begin{equation}\label{equ:dag_bound}
R\le len(G)+\frac{vol(G)-\sum_{i=0}^{k} len(\mypath_i)}{m-k}
\end{equation}
\end{lemma}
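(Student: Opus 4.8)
Fix an execution sequence $\varepsilon$ whose response time is $R$. The plan is to run a Graham-style workload-counting argument over the full window $[0,R]$, but to sharpen the count using the virtual path list supplied by Lemma~\ref{lem:workload}. First I would single out a \emph{critical} complete path $\mypath'$ of $G$ by the classical backward construction: starting from $v_{snk}$, repeatedly prepend the predecessor that finishes last in $\varepsilon$, until $v_{src}$ is reached. The point of this choice is the standard ``critical path'' property: at every instant of $[0,R]$ during which no vertex of $\mypath'$ is executing, the next vertex of $\mypath'$ is already eligible, so work-conservation forces all $m$ cores to be busy. Writing $L \coloneqq len(\mypath'_\varepsilon)$ for the total time $\mypath'$ actually occupies a core, the ``blocking'' time, during which $\mypath'$ is idle and all $m$ cores are busy, has total length $R-L$.

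Next I would invoke Lemma~\ref{lem:workload} with this $\mypath'$ to obtain a virtual path list $(\omega_i)_0^k$ with $\omega_0=\mypath'_\varepsilon$ and
\begin{equation*}
\sum_{i=1}^{k} len(\omega_i) \ge \sum_{i=1}^{k} len(\mypath_i)-\Delta(V)-(len(G)-len(\mypath)).
\end{equation*}
The heart of the argument is then a disjoint accounting of the executed workload $vol(G)-\Delta(V)$ into three parts: (a) the critical path itself, contributing $len(\omega_0)=L$ and living entirely inside the non-blocking time; (b) the $k$ virtual paths $\omega_1,\dots,\omega_k$, contributing $\sum_{i=1}^{k} len(\omega_i)$ on vertices disjoint from $\omega_0$ and from one another; and (c) the work done on the remaining cores during the blocking time. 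For (c) the key observation is that each $\omega_i$ is a chain and hence occupies at most one core at any instant, so the $k$ virtual paths jointly occupy at most $k$ cores; since all $m$ cores are busy during blocking while $\mypath'$ occupies none of them, at least $m-k$ cores run work lying on no $\omega_i$, giving at least $(m-k)(R-L)$ of ``other'' workload. Because a vertex executes contiguously on a single core, these three contributions fall on pairwise-disjoint vertex sets, so
\begin{equation*}
vol(G)-\Delta(V) \;\ge\; L + \sum_{i=1}^{k} len(\omega_i) + (m-k)(R-L).
\end{equation*}

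To finish, I would substitute the Lemma~\ref{lem:workload} lower bound for $\sum_{i=1}^{k} len(\omega_i)$ into the displayed inequality; the $\Delta(V)$ terms cancel, leaving a relation among $vol(G)$, the $len(\mypath_i)$, $len(G)$, $L$ and $R$. Rearranging for $(m-k)R$ and then using $L\le len(G)$ together with $m-k-1\ge 0$ (valid since $k\in[0,m-1]$) to replace the residual $L$ by $len(G)$ yields $(m-k)R \le (m-k)\,len(G) + vol(G) - \sum_{i=0}^{k} len(\mypath_i)$, which is exactly (\ref{equ:dag_bound}) after dividing by $m-k$. The main obstacle I expect is rigorously justifying the blocking-time facts inside this paper's machinery --- that the critical path can be chosen so that all cores are busy whenever it is idle, and that the virtual paths of Lemma~\ref{lem:workload} really do occupy at most $k$ distinct cores during that blocking time. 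These are precisely the roles played by the restricted critical path $\mypath^+$, its projection, and the designated cores $(P_i)_0^k$ in Lemmas~\ref{lem:restricted}--\ref{lem:workload}, and the disjointness and core-counting bookkeeping is where the care is needed; the concluding algebra (the cancellation of $\Delta(V)$ and the substitution $L\le len(G)$) is routine.
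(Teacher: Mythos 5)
Your proposal is correct and takes essentially the same route as the paper: your Graham-style blocking/workload accounting with the backward-constructed critical path and Lemma~\ref{lem:workload} is exactly the execution-level part that the paper inherits from \cite{he2022bounding} (yielding the intermediate bound (\ref{equ:dag_bound1})), and your final relaxation using $L\le len(G)$ together with $m-k-1\ge 0$ is precisely the paper's graph-level step showing $B_0-B_1=(len(G)-len(\mypath^*))\left(1-\frac{1}{m-k}\right)\ge 0$. The only cosmetic difference is that you carry the projection length $L=len(\mypath'_\varepsilon)$ where the paper's intermediate bound uses $len(\mypath^*)$, which simply merges the two relaxations $L\le len(\mypath^*)\le len(G)$ into one.
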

The proof of Lemma \ref{lem:dag_bound} can be divided into two parts: the execution-level part and the graph-level part.
The execution-level part deals with the abstractions specific to an execution sequence, such as the execution time of vertices (recall that in an execution sequence, the execution time of a vertex may be less than its WCET).
The reasoning of this part is largely the same as that of \cite{he2022bounding}, which uses Lemma \ref{lem:workload} to derive that the response time $R$ of an execution sequence $\eseq$ is bounded by
\begin{equation}\label{equ:dag_bound1}
R \le len(\mypath^*)+\frac{vol(G)-len(\mypath^*)-\sum_{i=0}^{k} len(\mypath_i)+len(G)}{m-k}
\end{equation}
where $\mypath^*$ is the critical path\footnote{A critical path \cite{he2019intra} is a complete path specific to an execution sequence.} of this execution sequence $\eseq$.

The graph-level part of this proof deals with the abstractions specific to a DAG task. This part of reasoning is new and different from \cite{he2022bounding}. The goal of this part is to derive that the bound in (\ref{equ:dag_bound}) is larger than or equal to the bound in (\ref{equ:dag_bound1}).
Let $B_0$ denote the bound in (\ref{equ:dag_bound}) and $B_1$ denote the bound in (\ref{equ:dag_bound1}).
\begin{align*}
&B_0 \begin{aligned}[t]
    &-B_1 = len(G)+\frac{vol(G)-\sum_{i=0}^{k} len(\mypath_i)}{m-k}- \\
    &(len(\mypath^*)+\frac{vol(G)-len(\mypath^*)-\sum_{i=0}^{k} len(\mypath_i)+len(G)}{m-k}) \\
\end{aligned} \\
&=len(G)-(len(\mypath^*)+\frac{len(G)-len(\mypath^*)}{m-k}) \\
&=len(G)-len(\mypath^*)-\frac{len(G)-len(\mypath^*)}{m-k} \\
&=(len(G)-len(\mypath^*))(1-\frac{1}{m-k})
\end{align*}
Since $len(G) \ge len(\mypath^*)$, we have $B_0-B_1 \ge 0$, which means $B_0 \ge B_1$.

In summary of these three differences in the derivation, the proposed bound is presented in Theorem \ref{thm:our_bound}.
\begin{theorem}\label{thm:our_bound}
Given a generalized path list $(\mypath_i)_0^k$ ($k \in [0, m-1]$), the response time $R$ of DAG $G$ scheduled by work-conserving scheduling on $m$ cores is bounded by:
\begin{equation}\label{equ:our_bound}
R\le \min \limits_{j \in [0, k]} \left\{ len(G)+\frac{vol(G)-\sum_{i=0}^{j} len(\mypath_i)}{m-j} \right\}
\end{equation}
\end{theorem}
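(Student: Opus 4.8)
The plan is to derive the theorem directly from Lemma~\ref{lem:dag_bound} by applying that lemma not to the full generalized path list but to each of its prefixes. The crucial observation is that Lemma~\ref{lem:dag_bound} already furnishes a valid response-time bound for \emph{any} generalized path list of size at most $m$, and every prefix of the given list $(\mypath_i)_0^k$ is again such a list. Since the response time $R$ of a fixed execution is a single number that does not depend on which prefix we choose to analyze, each prefix yields a valid upper bound on the same $R$; hence $R$ is bounded by the smallest of these bounds, which is precisely the right-hand side of (\ref{equ:our_bound}).

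Concretely, I would proceed as follows. First, fix an arbitrary index $j \in [0, k]$ and consider the truncated list $(\mypath_i)_0^j$. I would check that this is indeed a generalized path list in the sense of Definition~\ref{def:path_list}: the pairwise disjointness of $\mypath_0, \dots, \mypath_j$ is inherited from the disjointness of the full list $\mypath_0, \dots, \mypath_k$, so no fresh verification of the ancestor structure is needed. I would also note the index bound $j \le k \le m-1$, so that $(\mypath_i)_0^j$ satisfies the hypothesis $j \in [0, m-1]$ required by Lemma~\ref{lem:dag_bound}. Second, applying Lemma~\ref{lem:dag_bound} to $(\mypath_i)_0^j$ gives
\begin{equation*}
R \le len(G)+\frac{vol(G)-\sum_{i=0}^{j} len(\mypath_i)}{m-j}.
\end{equation*}
Third, since this inequality holds for every $j \in [0, k]$ while $R$, $len(G)$, $vol(G)$, and the individual lengths $len(\mypath_i)$ are all fixed quantities, I would take the minimum of the right-hand side over $j \in [0, k]$, obtaining exactly the bound claimed in (\ref{equ:our_bound}).

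There is no substantive obstacle here; the analytical content of the theorem lies entirely in Lemma~\ref{lem:dag_bound}, and the theorem merely packages the family of prefix bounds into a single tightest one. The only point deserving a sentence of care is the verification that each prefix is a legitimate generalized path list, which reduces to the trivial fact that a subfamily of pairwise-disjoint sets is pairwise disjoint, together with the range check $j \le m-1$. I therefore expect the written argument to occupy only a few lines. Its value is chiefly conceptual: it makes explicit that enlarging the path list can only ever help, since the bound is taken as the minimum over all prefixes rather than being evaluated solely at the full list $(\mypath_i)_0^k$, so additional paths are exploited precisely when they strictly decrease the expression and are otherwise harmless.
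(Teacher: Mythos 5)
Your proposal is correct and is essentially the paper's own argument: the paper proves Theorem~\ref{thm:our_bound} by invoking Lemma~\ref{lem:dag_bound} on each prefix $(\mypath_i)_0^j$, $j \in [0,k]$ (exactly as in the proof of Theorem~2 of \cite{he2022bounding}), and taking the minimum of the resulting bounds. Your added care in checking that each prefix is itself a generalized path list and satisfies $j \le m-1$ is the same (trivial but necessary) verification implicit in the paper.
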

With Lemma \ref{lem:dag_bound}, the proof of Theorem \ref{thm:our_bound} is exactly the same as the proof of Theorem 2 in \cite{he2022bounding}.
The only difference between our bound in Theorem \ref{thm:our_bound} and the bound in \cite{he2022bounding} (see Theorem~\ref{thm:rtss_bound} in Section \ref{sec:dominance}) is that we do not have the constraint that the first generalized path $\chain_0$ of $(\mypath_i)_0^k$ should be the longest path of DAG $G$. We call this as the \emph{constraint of the longest path}.

The generalized path list $(\mypath_i)_0^k$ is an indispensable part of computing the bound in (\ref{equ:our_bound}).
Lifting the constraint of the longest path opens up three opportunities.
\begin{enumerate}
  \item \emph{Tightness}. The response time bound for a DAG task can be reduced, thus the schedulability for DAG tasks can be improved. Section \ref{sec:computation} presents an optimal algorithm to compute the proposed bound.
  \item \emph{Dominance}. The dominance among multi-path bounds can be established. Section \ref{sec:dominance} shows that the proposed bound dominates all three existing multi-path bounds \cite{he2022bounding, he2023degree, ueter2023parallel}.
  \item \emph{Sustainability}. The sustainability of multi-path bounds can be analyzed. Section \ref{sec:sustain} demonstrates that the proposed bound is the only multi-path bound that is proved to be self-sustainable.
\end{enumerate}

\section{Computation of the Bound}
\label{sec:computation}
The computation of the proposed response time bound in Theorem \ref{thm:our_bound} requires a generalized path list to be given.
This section studies how to compute this generalized path list for a DAG task.
Section \ref{sec:problem} uses an example to introduce and define the problem.
Section \ref{sec:optimal} presents an optimal algorithm for computing the generalized path list.
Section \ref{sec:overall} provides the overall method to compute the proposed bound optimally.

\subsection{The Problem}
\label{sec:problem}
This subsection first discusses how the generalized path list without the constraint of the longest path affects the response time bound, and second extracts the definition of the problem we are trying to solve.

We use an example for illustration.
Fig. \ref{fig:ill_nested} shows a DAG task $G$. The length $len(G)=4$ and the volume $vol(G)=6$. Let the number of cores $m=2$.
$(\chain_i)_0^1$ with $\chain_0=(v_0, v_3)$ and $\chain_1=(v_1)$ is a generalized path list where the first generalized path $\chain_0$ is the longest path.
For this generalized path list, the bound in (\ref{equ:our_bound}) is computed as
\begin{align*}
R   &\le \min\{4+(6-4)/2, 4+(6-4-1)/(2-1)\} \\
    &=\min\{5, 5\}=5
\end{align*}
Since the first generalized path $\chain_0$ is the longest path, the bound in \cite{he2022bounding} is also $5$.

Next we use another generalized path list without the constraint of the longest path to compute the bound. $(\chain_i)_0^1$ with $\chain_0=(v_0, v_1)$ and $\chain_1=(v_2, v_3)$ is a generalized path list.
For this generalized path list, the bound in (\ref{equ:our_bound}) is computed as
\begin{align*}
R   &\le \min\{4+(6-4)/2, 4+(6-4-2)/(2-1)\} \\
    &=\min\{5, 4\}=4
\end{align*}
Since in this case, the first generalized path $\chain_0$ is not the longest path, bound $R \le 4$ cannot be achieved by the method in \cite{he2022bounding}.

\begin{figure}[t]
\centering
\subfloat[]{
    \includegraphics[width=0.22\linewidth]{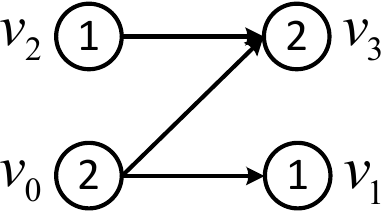}
    \label{fig:ill_nested}
}
\subfloat[]{
    \includegraphics[width=0.76\linewidth]{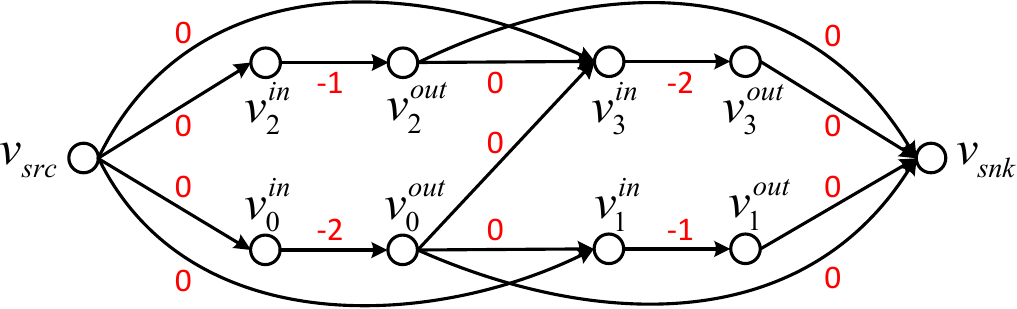}
    \label{fig:mini_cost}
}
\caption{(a) A DAG example for Section \ref{sec:problem}. (b) A flow network constructed from the DAG in Fig. \ref{fig:ill_nested}.}
\label{fig:comp_example}
\end{figure}

The example in Fig. \ref{fig:ill_nested} illustrates how lifting the constraint of the longest path may reduce the response time bound of DAG tasks.
In practice, algorithms for computing the generalized path list in \cite{he2022bounding, ueter2023parallel, he2023degree} can all be used to compute the generalized path list for the proposed response time bound in (\ref{equ:our_bound}).
We observe that all the algorithms for computing the generalized path list in \cite{he2022bounding, ueter2023parallel, he2023degree} are not optimal.
In Lemma \ref{lem:dag_bound}, for a DAG task $G$ and a specific $k \in [0, m-1]$, to minimize (\ref{equ:dag_bound}),  $\sum_{i=0}^{k} len(\mypath_i)$ should be maximized.
For a generalized path list, we call the number of generalized paths in this generalized path list as the \emph{cardinality}, and the total workload in this generalized path list as the \emph{volume}.
For $(\mypath_i)_0^k$, the cardinality is $k+1$ and the volume is $\sum_{i=0}^{k} len(\mypath_i)$.
From the context of DAG scheduling, we extract the following problem.

\begin{problem}\label{pro:path}
Given a DAG task $G=(V, E)$ and an integer $n$, how to compute a generalized path list of cardinality $n$ with the maximum volume?
\end{problem}

If $n=1$, this problem degrades to computing the longest path of a directed acyclic graph and the maximum volume is the length of the longest path $len(G)$. In this case, the problem is of time complexity $O(|V|+|E|)$.
If $n$ is equal to or larger than the width\footnote{A DAG is a partially ordered set. The width of a partially ordered set is the maximum number of mutually incomparable elements in this partially ordered set \cite{gratzer2002general}.} of DAG $G$, this problem degrades to computing the width of a directed acyclic graph and the maximum volume is the volume of the DAG $vol(G)$. In this case, the problem is of time complexity $O(|E|\sqrt{|V|})$ \cite{hopcroft1973n, garey1979computers}.

However, for $1<n<w$ (where $w$ denotes the width of DAG $G$), to the best of our knowledge, this problem is still open.
Notably, \cite{ueter2023parallel} shows that its algorithm for computing the generalized path list (i.e. Algorithm 1 of \cite{ueter2023parallel}) has an approximation ratio of $2-\frac{1}{w}$, which indicates the gap between the algorithm in \cite{ueter2023parallel} and the optimal algorithm of Problem \ref{pro:path}.

\subsection{The Optimal Algorithm}
\label{sec:optimal}
The subsection presents an optimal algorithm for Problem \ref{pro:path} by reducing it to the \emph{minimum-cost flow} problem \cite{ahuja1995network}, which has known solutions.

\vspace{1mm}\noindent
\textbf{\S~Minimum-Cost Flow Problem}

First, we introduce the minimum-cost flow problem.
A \emph{flow network} is a directed acyclic graph $G = (V, E)$, where $V$ is the set of vertices and $E\subseteq  V \times V $ is the set of edges. $G$ is with a single source vertex $v_{src} \in V$ and a single sink vertex $v_{snk} \in V$.
Each edge $(u, v) \in E$ is with a capacity $a(u, v)>0$ and a cost $c(u, v)$.

A \emph{flow} for the graph $G$ is a collection $\{f(u, v) \mid (u, v) \in E \}$ satisfying (\ref{equ:flow_value}) and (\ref{equ:flow_topology}).
\begin{equation}\label{equ:flow_value}
\forall (u, v) \in E: 0 \le f(u, v) \le a(u, v)
\end{equation}
\begin{equation}\label{equ:flow_topology}
\forall v \in V \setminus \{v_{src}, v_{snk}\}:  \sum_{u \in pre(v)} f(u, v)=\sum_{w \in suc(v)} f(v, w)
\end{equation}
where $pre(v)$ and $suc(v)$ are the set of predecessors and successors of $v$, respectively.
(\ref{equ:flow_value}) means that a flow $f(u, v)$ for an edge $(u, v)$ cannot exceed the capacity of this edge.
(\ref{equ:flow_topology}) means that for a vertex that is not the source or the sink, the total flow into this vertex equals the total flow out of this~vertex.

The \emph{amount} of a flow for $G$ is
$$\sum_{w \in suc(v_{src})} f(v_{src}, w) \text{\quad or \ } \sum_{u \in pre(v_{snk})} f(u, v_{snk})$$
In other words, the amount of flow is the total flow out of the source vertex, which is the same as the total flow into the sink vertex by (\ref{equ:flow_topology}).
The \emph{cost} of a flow for $G$ is
$$\sum_{(u, v) \in E} f(u, v) \cdot c(u, v)$$

\begin{problem}[Minimum-Cost Flow]\label{pro:flow}
Given a flow network $G=(V, E)$ and an integer $n$, how to compute a flow of amount $n$ with the minimum cost?
\end{problem}

There are many algorithms for the minimum-cost flow problem, such as the successive shortest path algorithm with time complexity $O(n(|E|+|V|log|V|))$ \cite{ahuja1995network}.

\vspace{1mm}\noindent
\textbf{\S~The Reduction}

Second, we present the reduction from Problem \ref{pro:path} to Problem \ref{pro:flow}.
From an arbitrary instance of Problem \ref{pro:path} (i.e. a DAG task $G=(V, E)$ and an integer $n$), we construct an instance of the minimum-cost flow problem by the following procedure.
\begin{enumerate}
  \item Connect ancestors. $\forall v \in V$ and $\forall u \in anc(v)$, add edge~$(u, v)$.
  \item Spilt vertices. $\forall v \in V$, replace $v$ with two vertices $v^{in}$ and $v^{out}$; add edge $(v^{in}, v^{out})$ with the cost $c(v^{in}, v^{out})=-c(v)$.
  \item Add source and sink. Add a source vertex $v_{src}$; for each $v^{in}$, add edge $(v_{src}, v^{in})$. Add a sink vertex $v_{snk}$; for each $v^{out}$, add edge $(v^{out}, v_{snk})$.
  \item Assign capacity and cost. For each edge $(u, v)$, the capacity $a(u, v)=1$. For each edge $(u, v)$ except for edges added in Step 2, the cost $c(u, v)=0$.
  \item The amount $n$ in Problem \ref{pro:flow} equals the cardinality $n$ in Problem \ref{pro:path}.
\end{enumerate}

In the constructed flow network, the number of vertices is $2|V|+2$ and the number of edges is at most $O(|V|^2)$.
Therefore, the above procedure is a polynomial reduction.

\begin{example}\label{exp:mini_cost}
Fig. \ref{fig:mini_cost} shows a flow network $G'=(V', E')$ constructed from the DAG task $G=(V, E)$ in Fig. \ref{fig:ill_nested} by the above reduction procedure. The red number beside the edges is the cost of this edge. The capacity of all edges is 1 and is not indicated in Fig. \ref{fig:mini_cost}.
\end{example}

\vspace{1mm}\noindent
\textbf{\S~The Correctness}

Third, we prove the correctness of the reduction as seen in Theorem \ref{thm:computation}.

\begin{theorem}\label{thm:computation}
Problem \ref{pro:path} can be solved in polynomial time.
\end{theorem}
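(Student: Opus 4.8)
The plan is to prove the \emph{correctness} of the reduction already constructed, namely that an integral minimum-cost flow of amount $n$ in the built network corresponds exactly to a maximum-volume generalized path list of cardinality $n$ in $G$, with the cost of the flow equal to the negative of the volume of the list. Since Problem \ref{pro:flow} admits a polynomial-time algorithm and the construction is a polynomial reduction (the network has $2|V|+2$ vertices and $O(|V|^2)$ edges), this value-preserving correspondence immediately yields a polynomial-time algorithm for Problem \ref{pro:path}.

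First I would establish the forward direction (flow $\to$ list). Because every edge capacity is $1$ and integral, the integrality theorem for minimum-cost flow guarantees an optimal flow in which each edge carries $0$ or $1$ unit. Since the network is acyclic, a flow of amount $n$ decomposes into exactly $n$ edge-disjoint unit paths from $v_{src}$ to $v_{snk}$. A typical such path has the form $v_{src}\to v_1^{in}\to v_1^{out}\to v_2^{in}\to\cdots\to v_\ell^{out}\to v_{snk}$, and I would argue the visited original vertices $(v_1,\dots,v_\ell)$ form a generalized path: by the split in Step 2 the only out-edge of any $v_j^{in}$ is $(v_j^{in},v_j^{out})$, so the path alternates $in/out$ as shown, and each connecting hop $v_j^{out}\to v_{j+1}^{in}$ is an edge only if $(v_j,v_{j+1})$ was inserted in Step 1, i.e. precisely when $v_j\in anc(v_{j+1})$. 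The capacity-$1$ constraint on each split edge $(v^{in},v^{out})$ forces every original vertex onto at most one unit path, so the $n$ generalized paths are pairwise disjoint and constitute a generalized path list $(\chain_i)_0^{k}$ of cardinality $n=k+1$. As only the split edges carry nonzero (negative) cost, the total cost equals $-\sum_{i=0}^{k} len(\chain_i)$, the negative of the list's volume.

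Next I would establish the reverse direction (list $\to$ flow), so the correspondence is value-preserving both ways. Given any generalized path list $(\chain_i)_0^{k}$ with $k+1=n$, I route one unit of flow along each $\chain_i=(v_1,\dots,v_\ell)$ via $v_{src}\to v_1^{in}\to v_1^{out}\to\cdots\to v_\ell^{out}\to v_{snk}$; the connecting edges exist because consecutive vertices satisfy the ancestor relation, and disjointness of the paths ensures no split edge is used twice, so all capacity constraints hold. This gives a feasible flow of amount $n$ with cost $-\sum_{i=0}^{k} len(\chain_i)$. (The same observation handles feasibility: an amount-$n$ flow exists iff $n\le|V|$, matching exactly when a cardinality-$n$ list exists.) Combining the two directions shows that the minimum flow cost equals the negative of the maximum list volume, and reading off the vertices traversed by the split edges of the optimal flow recovers the optimal list.

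The main obstacle I anticipate is the careful bookkeeping of the decomposition step and confirming the two directions are genuinely inverse on the objective: in particular, that an arbitrary integral flow splits into \emph{nonempty}, vertex-disjoint generalized paths (no unit path can shortcut directly from $v_{src}$ to $v_{snk}$, since no such edge exists, while Step 1's transitive edges are exactly what makes each internal hop legal), and that the capacity/cost accounting matches in both directions. A secondary point to verify is that, although the split edges carry negative costs, the network is acyclic and hence free of negative-cost cycles, so the cited minimum-cost flow algorithms apply unchanged, completing the polynomial-time claim.
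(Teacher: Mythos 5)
Your proposal is correct and follows essentially the same route as the paper's proof: both directions of a value-preserving correspondence between unit flows in the constructed network and disjoint generalized path lists in $G$, with flow cost equal to the negative of list volume. Your treatment is in fact slightly more careful than the paper's on two points it glosses over --- invoking the integrality theorem to justify that an optimal flow can be taken $0/1$-valued, and noting that the negative-cost split edges are harmless because the network is acyclic.
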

\begin{proof}
We prove this by showing that the reduction from Problem \ref{pro:path} to Problem \ref{pro:flow} is correct, such that Problem \ref{pro:path} can be solved by using algorithms for Problem \ref{pro:flow}.
To prove the correctness of the above reduction, we follow the proving framework of many-one reduction by showing that the decision version of Problem \ref{pro:path} is equivalent to the corresponding decision version of Problem \ref{pro:flow}.
Specifically, we prove that Statement \ref{sta:path} is true if and only if Statement \ref{sta:flow} is true.

\begin{statement}\label{sta:path}
In DAG task $G=(V, E)$, there exists a generalized path list of cardinality $n$ with volume no less than $X$.
\end{statement}
\begin{statement}\label{sta:flow}
In flow network $G'=(V', E')$, which is constructed from DAG task $G$ by the above reduction, there exists a flow of amount $n$ with cost no greater than $-X$.
\end{statement}

\noindent
\textbf{Sufficiency:} if Statement \ref{sta:flow} is true, then Statement \ref{sta:path} is true.

Let this flow be $\{f(u, v) \mid (u, v) \in E' \}$ and the cost of this flow is no greater than $-X$.
Since the capacity of every edge in $G'$ is 1, we have that either $f(u, v)=1$ or $f(u, v)=0$ in this flow.
Now we construct the required generalized path list for $G$.
Since the amount of this flow is $n$, in the outgoing edges of $v_{src}$, there are $n$ edges with a flow of $1$.
For each of these $n$ edges, following this one unit of flow, a generalized path $\chain_i$ can be constructed:
initially, let $\chain_i=\varnothing$; whenever encountering a vertex $v^{in}_j$ with $f(v^{in}_j, v^{out}_j)=1$, put $v_j$ into $\chain_i$ until the sink vertex of $G'$ is reached.
Now, $n$ number of generalized paths are constructed.

Since each $v^{in}$ has only one outgoing edge and its capacity is $1$, by (\ref{equ:flow_topology}), there is at most one edge with a flow of $1$ among the incoming edges of $v^{in}$. Therefore, there are no common vertices between any two generalized paths, which means that these generalized paths can form a generalized path list $(\chain_i)_0^{n-1}$.
Since the cost of each edge $(v^{in}, v^{out})$ in $G'$ is the negative of $c(v)$ in $G$ and the cost of this flow is no greater than $-X$, the volume of $(\chain_i)_0^{n-1}$ is no less than $X$.

\noindent
\textbf{Necessity:} if Statement \ref{sta:path} is true, then Statement \ref{sta:flow} is true.

Let this generalized path list be $(\chain_i)_0^{n-1}$ and the volume of this generalized path list is no less than $X$.
For each $\chain_i$ in this generalized path list, let $\chain_i=(\pv_0, \cdots, \pv_j, \cdots, \pv_k)$.
Now we construct the required flow for $G'$:
first, let $f(v_{src}, \pv^{in}_0)=1$ and $f(\pv^{in}_0,\pv^{out}_0)=1$;
second, for each $\pv_j$ and $0<j \le k$, let $f(\pv^{out}_{j-1}, \pv^{in}_j)=1$ and $f(\pv^{in}_j,\pv^{out}_j)=1$;
third, let $f(\pv^{out}_k, v_{snk})=1$;
finally, let the flow for all other edges in $G'$ be 0.
Now, for each edge in $G'$, a flow for this edge is constructed.

Since $(\chain_i)_0^{n-1}$ is a generalized path list, there are no common vertices between any two generalized paths.
Therefore, for each vertex $v$ in this generalized path list, there is only one edge with a flow of $1$ in the incoming edges of $v^{in}$ and there is only one edge with a flow of $1$ in the outgoing edges of $v^{out}$.
Therefore, for each vertex $v$ in this generalized path list, we have
$\sum_{u \in pre(v^{in})} f(u, v^{in})=f(v^{in}, v^{out})=\sum_{w \in suc(v^{out})} f(v^{out}, w)=1$.
For each vertex $v$ not in this generalized path list, the total flow into $v^{in}$, the total flow out of $v^{in}$, the total flow into $v^{out}$ and the total flow out of $v^{out}$ are all $0$.
In summary, (\ref{equ:flow_topology}) holds. And obviously, (\ref{equ:flow_value}) holds.
Therefore, the above constructed flows for all edges in $G'$ can form a flow for $G'$.

Since there are $n$ number of generalized paths in $(\chain_i)_0^{n-1}$, there are $n$ number of edges with a flow of $1$ in the outgoing edges of $v_{src}$ in $G'$. Therefore, the amount of the constructed flow is $n$.
Since the volume of this generalized path list is no less than $X$, for the same reason as the sufficiency proof, the cost of the constructed flow is no greater than $-X$.
\end{proof}

\begin{example}\label{exp:list_flow}
This example continues Example \ref{exp:mini_cost} and illustrates the correspondence between a generalized path list in a DAG task and a flow in a flow network.
$(\chain_i)_0^1$ with $\chain_0=(v_0, v_3)$ and $\chain_1=(v_1)$ is a generalized path list for the DAG task $G$ in Fig. \ref{fig:ill_nested}. Its cardinality is 2 and its volume is 5.
By the necessity proof of Theorem \ref{thm:computation}, a corresponding flow for the flow network $G'$ in Fig. \ref{fig:mini_cost} can be constructed, which is: for each edge $(u, v)$ in $\{(v_{src}, v^{in}_0), (v^{in}_0, v^{out}_0), (v^{out}_0, v^{in}_3), (v^{in}_3, v^{out}_3), (v^{out}_3, v_{snk}), \\ (v_{src}, v^{in}_1), (v^{in}_1, v^{out}_1), (v^{out}_1, v_{snk})\}$, $f(u, v)=1$; for any other edge $(u, v)$ in $G'$, $f(u, v)=0$. The amount of this flow is 2 and the cost is -5.

Conversely, if we have a flow as above for $G'$, by the sufficiency proof of Theorem \ref{thm:computation}, a corresponding generalized path list for $G$ can be constructed, which is the same as $(\chain_i)_0^1$.
\end{example}

Theorem \ref{thm:computation} shows that the algorithms for the minimum-cost flow problem can be used to solve Problem \ref{pro:path}.
However, note that this is a subtle difference between the time complexities of the two problems.
As stated before, Problem \ref{pro:flow} has a time complexity $O(n(|E|+|V|log|V|))$, which is actually pseudo-polynomial. This is because the $n$ in Problem \ref{pro:flow} is a value and is not directly related to the size of the problem.
By Theorem \ref{thm:computation}, Problem~\ref{pro:path} also has a time complexity $O(n(|E|+|V|log|V|))$, which however is polynomial. This is because the $n$ in Problem~\ref{pro:path} is bounded by $|V|$, which is directly related to the size of Problem \ref{pro:path}.

\subsection{The Computation}
\label{sec:overall}
This subsection provides the overall algorithm to compute the proposed bound optimally as shown in Algorithm \ref{alg:computation}.

\begin{algorithm}[h]
    \caption{Optimal Computation of the Bound}\label{alg:computation}
    \DontPrintSemicolon
    \Input{the DAG task $G$, the number of cores $m$}
    \Output{the response time bound}
    $w \leftarrow$ the width of $G$ \\
    $n \leftarrow \min\{w, m\} $\\
    \ForEach{$j \leftarrow 0, \cdots, n-1$}{
        $W \leftarrow$ the maximum volume with cardinality $j+1$ \\
        $R_j \leftarrow len(G)+\frac{vol(G)-W}{m-j}$
    }
    \Return{$\min \limits_{j \in [0, n-1]} \{R_j\}$}
\end{algorithm}

In Algorithm \ref{alg:computation}, Line 1 computes the width (also called the degree of parallelism) of the DAG task. For algorithms of computing the width, see \cite{hopcroft1973n, garey1979computers, he2023degree}.
Line 4 computes the maximum volume of the generalized path lists with cardinality $j+1$ in $G$.
Line 4 is realized by first transforming the DAG task $G$ into a flow network $G'$ and second solving the minimum-cost flow problem for $G'$.
Line 5 computes a response time bound for $G$ on $m$ cores by using (\ref{equ:dag_bound}).
Since each $R_j$ is a safe response time bound by Lemma \ref{lem:dag_bound}, Line 7 takes the minimum one among all $R_j$ as the final response time bound for DAG task $G$ executing on $m$ cores under a work-conserving scheduler.

\textbf{Complexity.} The time complexity of Line 4 is $O(w(|E|+|V|log|V|))$, where $w$ is the width of $G$. The loop of Lines 3-6 executes for at most $w$ times. Therefore, the time complexity of Algorithm \ref{alg:computation} is $O(w^2(|E|+|V|log|V|))$.
We observe that the algorithm of Problem \ref{pro:flow} computes the minimum cost for flows of amount $n$ by iteratively computing the minimum cost for flows of amount $1, \cdots, n-1, n$. Therefore, with a well-integration with the algorithm of the minimum-cost flow problem in Line 4, Algorithm \ref{alg:computation} can be easily implemented with time complexity $O(w(|E|+|V|log|V|))$, where $w$ is the width of the DAG task and is bounded by the number of vertices~$|V|$.

\section{The Dominance Among Multi-Path Bounds}
\label{sec:dominance}
This section establishes the dominance among multi-path bounds: our bound dominates all three existing multi-path bounds \cite{he2022bounding, he2023degree, ueter2023parallel} and Graham's bound \cite{graham1969bounds}; the three existing multi-path bounds do not dominate each other.
For dominance, superior cases are provided; for nondominance, both superior cases and inferior cases are provided.

\textbf{Multi-Path Bound.} Regarding response time bound of a DAG task, the classic result, i.e. Graham's bound, utilizes the volume and the longest path of the DAG task for bounding response times. In contrast, results, such as the proposed bound, utilize the volume and multiple paths of the DAG task for bounding response times.
We call response time bounds using multiple paths of the DAG task as \emph{multi-path bounds}. Existing multi-path bounds include \cite{he2022bounding, he2023degree, ueter2023parallel} and the proposed bound in this paper.

We restate the response time bound of \cite{he2022bounding} as follows.
\begin{theorem}[Adapted from Theorem 2 of \cite{he2022bounding}]
\label{thm:rtss_bound}
Given a generalized path list $(\mypath_i)_0^k$ ($k \in [0, m-1]$) with $\mypath_0$ being the longest path of $G$, the response time $R$ of DAG $G$ scheduled by work-conserving scheduling on $m$ cores is bounded by:
\begin{equation}\label{equ:rtss_bound}
R\le \min \limits_{j \in [0, k]} \left\{ len(G)+\frac{vol(G)-\sum_{i=0}^{j} len(\mypath_i)}{m-j} \right\}
\end{equation}
\end{theorem}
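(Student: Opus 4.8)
The plan is to obtain Theorem \ref{thm:rtss_bound} as an immediate specialization of the already-established Theorem \ref{thm:our_bound}, rather than re-deriving it from scratch. The crucial observation is that the bounds in (\ref{equ:our_bound}) and (\ref{equ:rtss_bound}) are syntactically identical: both read $R\le \min_{j\in[0,k]}\{len(G)+(vol(G)-\sum_{i=0}^{j}len(\mypath_i))/(m-j)\}$. The hypotheses differ only in that Theorem \ref{thm:rtss_bound} additionally requires $\mypath_0$ to be the longest path of $G$. Since the longest path is itself a generalized path, any list admissible for Theorem \ref{thm:rtss_bound} is, a fortiori, a generalized path list in the sense of Definition \ref{def:path_list}, which is exactly what Theorem \ref{thm:our_bound} demands. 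Hence Theorem \ref{thm:our_bound} applies verbatim and delivers (\ref{equ:rtss_bound}) with no further argument.

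For readers who prefer a self-contained derivation matching the original proof of Theorem 2 in \cite{he2022bounding}, I would instead run the chain of lemmas of Section \ref{sec:bound} under the extra assumption $\mypath=\mypath_0=$ longest path, so that $len(G)=len(\mypath)$ and the correction term $len(G)-len(\mypath)$ vanishes everywhere. Under this specialization Lemma \ref{lem:workload_volume} and Lemma \ref{lem:workload} collapse to their counterparts (Lemma 8 and Lemma 9) in \cite{he2022bounding}, and the execution-level bound (\ref{equ:dag_bound1}) simplifies to $R\le len(\mypath^*)+(vol(G)-len(\mypath^*)-\sum_{i=0}^{k}len(\mypath_i))/(m-k)$. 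The graph-level step then shows this is dominated by the clean bound (\ref{equ:dag_bound}) of Lemma \ref{lem:dag_bound}, precisely via the $B_0\ge B_1$ calculation already carried out for Theorem \ref{thm:our_bound}.

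To pass from the single-cardinality bound of Lemma \ref{lem:dag_bound} to the minimum in (\ref{equ:rtss_bound}), I would apply Lemma \ref{lem:dag_bound} separately to each prefix $(\mypath_i)_0^j$, $j\in[0,k]$, of the given list. Each prefix is again a generalized path list whose first element is still the longest path, so the hypothesis of Theorem \ref{thm:rtss_bound} is preserved under truncation; this yields one valid bound for every $j$, and since the response time $R$ is bounded by all of them it is bounded by their minimum. This prefix argument is exactly the step the paper describes as ``exactly the same as the proof of Theorem 2 in \cite{he2022bounding}''.

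The main obstacle is bookkeeping rather than mathematics: all the substantive execution-level and graph-level reasoning has already been discharged in Section \ref{sec:bound}. The one point that genuinely deserves a check is that imposing $\mypath_0=$ longest path is a \emph{restriction} of, not an enlargement of, the hypotheses of Theorem \ref{thm:our_bound}, i.e.\ that nowhere in the Section \ref{sec:bound} derivation did the freedom to choose $\mypath_0$ arbitrarily do work that would break when $\mypath_0$ happens to be the longest path. Because Definition \ref{def:path_list} places no positional constraint on the members of the list, and the longest path satisfies every structural requirement used (it is a generalized path, and it is disjoint from the remaining $\mypath_i$ by assumption), this check is immediate and the specialization goes through without friction.
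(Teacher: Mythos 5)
Your proposal is correct, but note that the paper itself never proves Theorem \ref{thm:rtss_bound}: it is restated from the cited prior work (hence ``Adapted from Theorem 2 of \cite{he2022bounding}'') and serves only as an external reference point for the dominance analysis of Section \ref{sec:dominance}. Your first route---obtaining it as a specialization of Theorem \ref{thm:our_bound}, since adding the requirement that $\mypath_0$ be the longest path only shrinks the set of admissible generalized path lists while the bound formulas (\ref{equ:our_bound}) and (\ref{equ:rtss_bound}) are identical---is logically sound and, importantly, non-circular within this paper's development, because Theorem \ref{thm:our_bound} is established through the generalized Lemmas \ref{lem:restricted}--\ref{lem:dag_bound} without ever invoking the prior theorem as a premise (only its proof \emph{technique} is reused). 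In fact, this inclusion of hypothesis classes is exactly the observation the paper makes when proving the dominance result Theorem \ref{thm:our_rtss}; you are simply reading that observation in the reverse direction, as a derivation of the old bound rather than as a comparison between bounds. Your second, self-contained route---setting $\mypath=\mypath_0$ equal to the longest path so that $len(G)-len(\mypath)=0$, which collapses Lemmas \ref{lem:workload_volume} and \ref{lem:workload} to Lemmas 8 and 9 of \cite{he2022bounding}, and then applying Lemma \ref{lem:dag_bound} to each prefix $(\mypath_i)_0^j$ and taking the minimum over $j$---reconstructs the original argument of \cite{he2022bounding} that the paper relies on by citation. The one substantive check, which you correctly identify and discharge, is that every prefix of an admissible list remains admissible (disjointness and the positional constraint on $\mypath_0$ are both preserved under truncation), so the minimum over $j\in[0,k]$ is a minimum over valid single-cardinality bounds.
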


\begin{theorem}\label{thm:our_rtss}
The bound in Theorem \ref{thm:our_bound} dominates the bound of \cite{he2022bounding} in Theorem \ref{thm:rtss_bound}.
\end{theorem}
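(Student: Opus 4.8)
The plan is to exploit the fact that the two bounds are computed by the \emph{same} formula and differ only in the set of admissible generalized path lists: Theorem \ref{thm:rtss_bound} insists that the first generalized path $\mypath_0$ be the longest path of $G$, whereas Theorem \ref{thm:our_bound} imposes no such constraint. I would therefore establish dominance in two parts: first that the new bound is never worse than the bound of \cite{he2022bounding} on any DAG task, and second that there is a concrete task on which it is strictly better (a superior case).

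For the first part, fix a DAG $G$ and $m$ cores. Let $\mathcal{L}_{\text{old}}$ be the set of generalized path lists $(\mypath_i)_0^k$ ($k \in [0, m-1]$) with $\mypath_0$ equal to the longest path, and let $\mathcal{L}_{\text{new}}$ be the set of \emph{all} generalized path lists with $k \in [0, m-1]$. By Definition \ref{def:path_list}, every list in $\mathcal{L}_{\text{old}}$ is also a generalized path list, so $\mathcal{L}_{\text{old}} \subseteq \mathcal{L}_{\text{new}}$. Since the objective $\min_{j \in [0,k]} \{ len(G) + (vol(G) - \sum_{i=0}^{j} len(\mypath_i))/(m-j) \}$ is literally identical in both theorems, any list in $\mathcal{L}_{\text{old}}$ produces exactly the same value under the new bound as under the old bound. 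Minimizing this common objective over the larger set $\mathcal{L}_{\text{new}}$ can therefore only decrease or leave unchanged the optimal value, so the best bound attainable by Theorem \ref{thm:our_bound} is at most the best bound attainable by Theorem \ref{thm:rtss_bound}. This proves the ``no worse'' direction for every DAG task.

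For the second part, I would reuse the task of Fig. \ref{fig:ill_nested} analyzed in Section \ref{sec:problem}. There, with $m=2$, the generalized path list $\chain_0=(v_0, v_1)$, $\chain_1=(v_2, v_3)$ --- whose first path is \emph{not} the longest path --- yields the new bound $R \le 4$, while every list admissible for \cite{he2022bounding} (which is forced to take $\chain_0$ to be the longest path $(v_0, v_3)$) yields $R \le 5$. Hence this is a superior case, and combined with the first part it establishes that Theorem \ref{thm:our_bound} dominates Theorem \ref{thm:rtss_bound}.

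The main thing to get right is the meaning of dominance: the comparison is between the \emph{optimal} bounds each method can produce over its admissible path lists, not between the two formulas evaluated on a single fixed list (on which they coincide trivially). Once this is pinned down, the ``no worse'' direction is an immediate consequence of minimizing one common objective over nested feasible sets, and the only remaining care is to confirm that the superior-case example genuinely cannot be matched under the longest-path constraint --- that is, that forcing $\chain_0=(v_0, v_3)$ really does keep the bound at $5$, which the computation in Section \ref{sec:problem} already verifies.
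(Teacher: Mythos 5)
Your proposal is correct and follows essentially the same route as the paper: the paper's proof likewise observes that the two formulas (\ref{equ:our_bound}) and (\ref{equ:rtss_bound}) are identical and that every generalized path list admissible for Theorem \ref{thm:rtss_bound} is admissible for Theorem \ref{thm:our_bound} (but not conversely), and it supplies the same Fig.~\ref{fig:ill_nested} example as the superior case. Your write-up is merely more explicit about the nested-feasible-sets framing and about verifying that no longest-path-constrained list can beat $5$ in the example, which the paper leaves implicit.
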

\begin{proof}
Regarding the generalized path list $(\mypath_i)_0^k$ used to compute the two bounds, Theorem \ref{thm:rtss_bound} requires that the first generalized path $\mypath_0$ must be the longest path of $G$. However, Theorem \ref{thm:our_bound} does not have this requirement. This means that a generalized path list for Theorem \ref{thm:rtss_bound} is a generalized path list for Theorem \ref{thm:our_bound}. But the opposite is not true: a generalized path list for Theorem \ref{thm:our_bound} may not be a generalized path list for Theorem \ref{thm:rtss_bound}.
Note that (\ref{equ:our_bound}) is the same as (\ref{equ:rtss_bound}). The theorem follows.
\end{proof}

\begin{superior*}
The example in Fig. \ref{fig:ill_nested} of Section \ref{sec:problem} explains how our bound in Theorem \ref{thm:our_bound} can be smaller than the bound in Theorem \ref{thm:rtss_bound}.
For the DAG task in Fig. \ref{fig:ill_nested} scheduled on $m=2$ cores, our bound is $4$ and the bound of \cite{he2022bounding} is $5$.
\end{superior*}

\begin{theorem}[Adapted from Theorem 4 of \cite{he2023degree}]
\label{thm:date_bound}
Given a generalized path list $(\mypath_i)_0^k$ ($k \in [0, m-1]$), the response time $R$ of DAG $G$ scheduled by work-conserving scheduling on $m$ cores is bounded by:
\begin{equation}\label{equ:date_bound}
R\le len(G)+vol(G)-\sum_{i=0}^{j} len(\mypath_i)
\end{equation}
\end{theorem}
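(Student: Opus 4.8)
The plan is to derive (\ref{equ:date_bound}) as an immediate relaxation of Lemma \ref{lem:dag_bound}, which already packages the entire execution-level analysis for an arbitrary generalized path list. The key observation is that the right-hand side of (\ref{equ:date_bound}) is simply the right-hand side of (\ref{equ:dag_bound}) with the denominator $m-j$ replaced by $1$: since that replacement can only enlarge the expression, anything Lemma \ref{lem:dag_bound} certifies is automatically certified by (\ref{equ:date_bound}). So I would not re-run any scheduling argument; I would reduce the statement to the lemma already proved.

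Concretely, I would fix an index $j \in [0,k]$ and apply Lemma \ref{lem:dag_bound} to the truncated list $(\mypath_i)_0^j$. Any sub-collection of pairwise-disjoint generalized paths is again a generalized path list in the sense of Definition \ref{def:path_list}, and $j \le k \le m-1$, so $(\mypath_i)_0^j$ is admissible (playing the role of the "$k$" in the lemma). Lemma \ref{lem:dag_bound} then yields
\begin{equation*}
R \le len(G) + \frac{vol(G) - \sum_{i=0}^{j} len(\mypath_i)}{m-j}.
\end{equation*}

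Next I would bound the fraction above by its own numerator. Two facts make this step routine: first, $m-j \ge 1$ because $j \le k \le m-1$; second, the numerator is non-negative because the generalized paths are pairwise disjoint, so $\sum_{i=0}^{j} len(\mypath_i) = vol\!\left(\bigcup_{i=0}^{j}\mypath_i\right) \le vol(G)$. Dividing a non-negative quantity by something at least $1$ never increases it, hence
\begin{equation*}
\frac{vol(G) - \sum_{i=0}^{j} len(\mypath_i)}{m-j} \le vol(G) - \sum_{i=0}^{j} len(\mypath_i),
\end{equation*}
and combining the two displays gives exactly (\ref{equ:date_bound}).

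I do not expect a genuine obstacle in the derivation itself, since the target bound is a relaxation of Lemma \ref{lem:dag_bound} rather than a sharpening of it; the one point demanding care is the faithfulness of the adaptation. The bound of \cite{he2023degree} is originally obtained through its degree-of-parallelism machinery, not through Lemma \ref{lem:dag_bound}, so the main thing I would verify is that the generalized path list admitted by Theorem \ref{thm:date_bound} really is a plain list of disjoint generalized paths with no longest-path constraint, matching what Lemma \ref{lem:dag_bound} expects. Once that alignment is confirmed the relaxation argument closes the proof, and it simultaneously previews the dominance claim of Section \ref{sec:dominance}: on the very same path list the proposed bound (\ref{equ:our_bound}) retains the smaller denominator $m-j$ and is therefore never worse than (\ref{equ:date_bound}).
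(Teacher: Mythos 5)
Your proof is correct, but it follows a genuinely different route from the paper --- in fact, the paper gives no proof of Theorem \ref{thm:date_bound} at all: it is imported as an adapted statement of Theorem 4 of \cite{he2023degree}, where it was originally established through that paper's degree-of-parallelism machinery, entirely independent of any path-list lemma here. Your derivation instead makes the result a corollary of Lemma \ref{lem:dag_bound}: truncating to $(\mypath_i)_0^j$ preserves pairwise disjointness so the lemma applies with $j\le k\le m-1$; the numerator $vol(G)-\sum_{i=0}^{j} len(\mypath_i)$ is non-negative because the disjoint paths form a subset of $V$; and $m-j\ge 1$, so the fraction is bounded by its numerator. All three steps are sound, and there is no circularity, since Lemma \ref{lem:dag_bound} is proved by generalizing \cite{he2022bounding} without reference to \cite{he2023degree}. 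Notably, the relaxation inequality you use is exactly the display the paper deploys in its proof of Theorem \ref{thm:our_date}, where it serves to show dominance of (\ref{equ:our_bound}) over (\ref{equ:date_bound}); so your argument buys a self-contained derivation of the imported bound and the dominance claim in a single stroke, whereas the paper keeps the two on independent footing (the bound of \cite{he2023degree} stands by its original proof, and dominance is then checked pointwise over path lists). Your faithfulness worry is resolved in your favor: as restated in this paper, Theorem \ref{thm:date_bound} admits an arbitrary generalized path list with no longest-path constraint, which is precisely what Lemma \ref{lem:dag_bound} expects, and your quantification over every $j\in[0,k]$ also quietly absorbs the loose index $j$ appearing in (\ref{equ:date_bound}) while its hypothesis names $k$.
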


\begin{theorem}\label{thm:our_date}
The bound in Theorem \ref{thm:our_bound} dominates the bound of \cite{he2023degree} in Theorem \ref{thm:date_bound}.
\end{theorem}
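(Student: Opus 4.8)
The plan is to prove the dominance by evaluating our bound on the \emph{very same} generalized path list that \cite{he2023degree} uses and showing that, term by term, it never exceeds the bound of Theorem \ref{thm:date_bound}. The first step I would make explicit is that this reuse is legitimate: Theorem \ref{thm:our_bound} imposes no constraint on $\mypath_0$ (in particular it need not be the longest path), so any generalized path list $(\mypath_i)_0^k$ admissible for Theorem \ref{thm:date_bound} is also admissible for Theorem \ref{thm:our_bound}. Fixing such a list, it then suffices to compare the two closed-form expressions directly, with no further reference to the underlying execution-sequence reasoning.

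The core estimate is a per-index comparison. For each $j \in [0,k]$ set $A_j = vol(G) - \sum_{i=0}^{j} len(\mypath_i)$. Because the paths in the list are pairwise disjoint (Definition \ref{def:path_list}) and contained in $V$, we have $\sum_{i=0}^{j} len(\mypath_i) = vol(\bigcup_{i=0}^{j} \mypath_i) \le vol(G)$, hence $A_j \ge 0$; and since $j \le k \le m-1$ we have $m-j \ge 1$. Dividing a nonnegative number by a quantity at least $1$ cannot increase it, so $A_j/(m-j) \le A_j$. Adding $len(G)$ to both sides yields, for every $j$, that the $j$-th term $len(G) + A_j/(m-j)$ of our minimization is at most the corresponding instance $len(G) + A_j$ of (\ref{equ:date_bound}). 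Taking the minimum over $j$ on the left only decreases it, so our bound is at most every instance of the right-hand side of (\ref{equ:date_bound}), hence at most the bound of \cite{he2023degree}; this establishes dominance, and I would then complete the claim by exhibiting a superior case where the inequality is strict.

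The argument is short, and the only genuinely load-bearing facts are $A_j \ge 0$ and $m-j \ge 1$: the former rests on the disjointness of the path list, and the latter on the cardinality restriction $k \le m-1$ shared by both theorems. The one point requiring care is the admissibility transfer in the first step. If Theorem \ref{thm:our_bound} still required $\mypath_0$ to be the longest path, then \cite{he2023degree}'s list---whose first path need not be longest---could not be fed into our bound, and the comparison would collapse. Thus the dominance is a direct dividend of lifting the constraint of the longest path, and I would write the proof so that this dependence is made visible rather than hidden in the algebra.
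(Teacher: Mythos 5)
Your proposal is correct and follows essentially the same route as the paper: both transfer the same generalized path list between the two theorems (legitimate since neither constrains $\mypath_0$) and then compare the closed-form expressions term by term via the inequality $\frac{vol(G)-\sum_{i=0}^{j} len(\mypath_i)}{m-j} \le vol(G)-\sum_{i=0}^{j} len(\mypath_i)$. The only difference is that you make explicit the two facts the paper leaves implicit (nonnegativity of the numerator from disjointness, and $m-j\ge 1$ from $k\le m-1$), which is a welcome but not substantively different addition.
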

\begin{proof}
For the two bounds, the generalized path lists $(\mypath_i)_0^k$ are the same: a generalized path list for Theorem \ref{thm:date_bound} is a generalized path list for Theorem \ref{thm:our_bound}, and vise versa.
For an arbitrary generalized path lists $(\mypath_i)_0^k$ ($k \in [0, m-1]$), we have
\begin{multline*}
len(G)+\frac{vol(G)-\sum_{i=0}^{j} len(\mypath_i)}{m-j} \\ \le len(G)+vol(G)-\sum_{i=0}^{j} len(\mypath_i)
\end{multline*}
The theorem follows.
\end{proof}

\begin{figure}[t]
\centering
\subfloat[]{
    \includegraphics[width=0.28\linewidth]{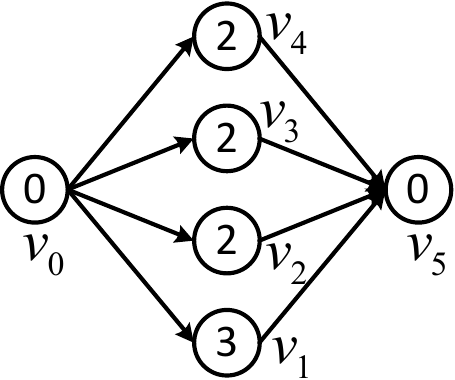}
    \label{fig:dag_date}
}
\hfil
\subfloat[]{
    \includegraphics[width=0.33\linewidth]{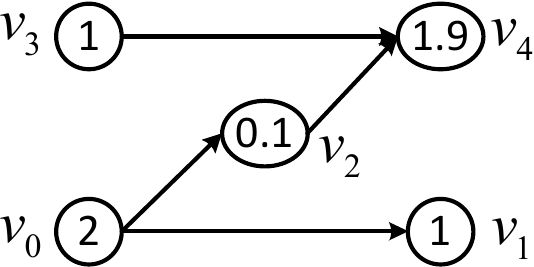}
    \label{fig:dag_chen}
}
\caption{(a) A DAG example for Theorem \ref{thm:our_date}. (b) A DAG example for Theorem~\ref{thm:our_chen}.}
\label{fig:domi_example}
\end{figure}

\begin{superior*}
Fig. \ref{fig:dag_date} shows a DAG task $G$ where $v_0$ and $v_5$ are dummy source vertex and dummy sink vertex, respectively. The length $len(G)=3$ and the volume $vol(G)=9$. Let the number of cores $m=2$.
$(\chain_i)_0^1$ with $\chain_0=(v_0, v_1, v_5)$ and $\chain_1=(v_2)$ is a generalized path list.
The bound in (\ref{equ:our_bound}) is computed as
\begin{align*}
R   &\le \min\{3+(9-3)/2, 3+(9-3-2)/(2-1)\} \\
    &=\min\{6, 7\}=6
\end{align*}
The bound in (\ref{equ:date_bound}) is computed as
$$
R \le 3+9-5=7
$$
For the DAG task in Fig. \ref{fig:dag_date} scheduled on $m=2$ cores, our bound is smaller than the bound of \cite{he2023degree}.
\end{superior*}

\begin{theorem}[Adapted from Theorem 6 of \cite{ueter2023parallel}]
\label{thm:chen_bound}
Given a generalized path list $(\mypath_i)_0^k$ ($k \in [0, m-1]$), the response time $R$ of DAG $G$ scheduled by work-conserving scheduling with preemptive vertex-level priorities on $m$ cores is bounded by:
\begin{equation}\label{equ:chen_bound}
R\le \min \limits_{j \in [0, k]} \left\{ len(G)+\frac{vol(G)-\sum_{i=0}^{j} len(\mypath_i)}{m-j} \right\}
\end{equation}
\end{theorem}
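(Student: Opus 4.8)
The plan is to obtain Theorem~\ref{thm:chen_bound} as an immediate corollary of Theorem~\ref{thm:our_bound}, rather than reproducing the original derivation of \cite{ueter2023parallel}. Two observations drive this. First, the bound expression in (\ref{equ:chen_bound}) is syntactically identical to the expression in (\ref{equ:our_bound}): the same generalized path list $(\mypath_i)_0^k$ with $k \in [0, m-1]$, the same index range $j \in [0,k]$, and the same right-hand side. Second, the scheduling model assumed here---work-conserving scheduling with preemptive vertex-level priorities---is a specialization of the plain work-conserving model of Section~\ref{sec:schedule} that underlies Theorem~\ref{thm:our_bound}. The vertex-level priority discipline only dictates \emph{which} eligible vertices occupy the available cores; it does not permit a core to sit idle while an eligible vertex exists. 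Hence any such scheduler is itself a work-conserving scheduler, just a particular one.

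Concretely, I would first argue that every execution sequence produced under preemptive vertex-level priority scheduling satisfies the work-conserving property (an eligible vertex is executed whenever a core is available), so the set of its realizable execution sequences is contained in the set of all work-conserving execution sequences. I would then invoke Theorem~\ref{thm:our_bound}, which bounds $R$ by the right-hand side of (\ref{equ:our_bound}) for \emph{every} work-conserving scheduler and \emph{every} admissible generalized path list. Applying it to the execution sequence under analysis and noting that (\ref{equ:our_bound}) and (\ref{equ:chen_bound}) are the same expression yields the claim directly. (For a self-contained proof in the priority setting, one would instead reproduce the argument of \cite{ueter2023parallel}, which uses the priority structure to account for the workload that can delay the critical path; the corollary route avoids this entirely.)

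The main obstacle is the single substantive check in the first step: confirming that the discipline of \cite{ueter2023parallel} is genuinely work-conserving, rather than an execution-order discipline that may leave cores idle---such as those of \cite{voudouris2017timing, chen2019timing}, which the related-work discussion explicitly notes are \emph{not} work-conserving. If the discipline were non-work-conserving, the containment of scheduling models would fail and the corollary argument would collapse, forcing a return to the priority-based derivation. Because the theorem statement explicitly qualifies the scheduler as ``work-conserving,'' however, I expect this inclusion to hold, and the corollary route to go through cleanly.
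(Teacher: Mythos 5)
Your proposal is correct, but it does not mirror a proof in the paper, because the paper offers none: Theorem~\ref{thm:chen_bound} is stated as an external result (``adapted from Theorem 6 of \cite{ueter2023parallel}''), whose original derivation relies intrinsically on the vertex-level priorities. Your corollary route --- observe that a preemptive vertex-level-priority scheduler never idles a core while an eligible vertex exists, hence is work-conserving; conclude that its execution sequences form a subset of those covered by Theorem~\ref{thm:our_bound}; note that (\ref{equ:chen_bound}) and (\ref{equ:our_bound}) are the same expression --- is logically sound within this paper's framework, and the one obstacle you flag is settled by the paper itself, which states that the scheduler of \cite{ueter2023parallel} ``is work-conserving scheduling, but with more constraints'' than the plain work-conserving model; this is precisely what licenses $\mathcal{S}_1(G,m) \subseteq \mathcal{S}_0(G,m)$ in (\ref{equ:set_domi}). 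Interestingly, the paper deploys exactly this containment, but inside the proof of Theorem~\ref{thm:our_chen} and for a different purpose: to argue that its own bound, being valid over the \emph{larger} set of execution sequences, genuinely dominates that of \cite{ueter2023parallel}. As for what each route buys: yours is shorter, needs no priority machinery, and makes the subsumption explicit (Theorem~\ref{thm:chen_bound} becomes a weaker special case of Theorem~\ref{thm:our_bound}); the original priority-based derivation of \cite{ueter2023parallel} is self-contained and independent of this paper's analysis, which is the appropriate reading here, since the theorem is restated to represent prior work in the dominance and experimental comparisons --- those comparisons pair the bound with the path-list algorithm of \cite{ueter2023parallel} (versus the optimal computation of Section~\ref{sec:optimal}), not with a re-derivation from Theorem~\ref{thm:our_bound}.
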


The scheduling algorithm studied in \cite{ueter2023parallel} is a restricted version of work-conserving scheduling. It is work-conserving scheduling, but with more constraints than the scheduling algorithm studied in \cite{graham1969bounds, he2022bounding, he2023degree} and this paper. This paper and \cite{graham1969bounds, he2022bounding, he2023degree} only assume that the scheduling algorithm satisfies the work-conserving property. \cite{ueter2023parallel} assumes the work-conserving property and further requires vertex-level priorities to restrict the execution behavior of vertices.

We remark that a response time bound is an upper bound on a set of response times of all the execution sequences of a DAG task.
For a DAG task, the set of execution sequences is subject to the scheduling algorithm.
Let $\mathcal{S}$ denote a scheduling algorithm. Let $\mathcal{S}(G, m)$ denote the set of execution sequences of a DAG task $G$ scheduled by $\mathcal{S}$ on $m$ cores. Let $R(\eseq)$ denote the response time of an execution sequence $\eseq$.
Formally, a response time bound is an upper bound on the following set.
\begin{equation}\label{equ:boundset}
\{R(\eseq) \mid \eseq \in \mathcal{S}(G, m) \}
\end{equation}

\begin{theorem}\label{thm:our_chen}
The bound in Theorem \ref{thm:our_bound} dominates the bound of \cite{ueter2023parallel} in Theorem \ref{thm:chen_bound}.
\end{theorem}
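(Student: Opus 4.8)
The plan is to work entirely within the framework of equation (\ref{equ:boundset}), which recasts every response time bound as an upper bound on the set $\{R(\eseq) \mid \eseq \in \mathcal{S}(G,m)\}$ of response times induced by a scheduling algorithm $\mathcal{S}$. The central observation is that the two closed-form expressions (\ref{equ:our_bound}) and (\ref{equ:chen_bound}) are \emph{syntactically identical}, so for any fixed generalized path list $(\mypath_i)_0^k$ they evaluate to exactly the same number, say $B$. Consequently the dominance claim here cannot be a numerical comparison of two formulas; it must instead be a statement about the \emph{domain of validity} of the two theorems, that is, about which set of execution sequences each bound legitimately certifies.

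First I would formalize the two scheduling models. Let $\mathcal{S}_{wc}$ denote an arbitrary work-conserving scheduler (the setting of Theorem \ref{thm:our_bound}), and let $\mathcal{S}_{u}$ denote the restricted work-conserving scheduler of \cite{ueter2023parallel}, which additionally enforces preemptive vertex-level priorities. As the discussion preceding the theorem emphasizes, $\mathcal{S}_{u}$ is still a work-conserving scheduler, only with extra constraints on the execution behavior of vertices. Hence every execution sequence producible under $\mathcal{S}_{u}$ is also producible under a work-conserving scheduler, yielding the set inclusion $\mathcal{S}_{u}(G,m) \subseteq \mathcal{S}_{wc}(G,m)$.

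Next I would chain the pieces together. By Theorem \ref{thm:our_bound}, the value $B$ is an upper bound on $\{R(\eseq) \mid \eseq \in \mathcal{S}_{wc}(G,m)\}$. By monotonicity of suprema under the inclusion just established, $B$ is therefore also an upper bound on the smaller set $\{R(\eseq) \mid \eseq \in \mathcal{S}_{u}(G,m)\}$, which is precisely the quantity that the bound of \cite{ueter2023parallel} certifies. Thus Theorem \ref{thm:our_bound} is valid in the setting of \cite{ueter2023parallel} and attains the identical value $B$, so it is never worse. To witness \emph{strict} dominance I would then supply a superior case: a DAG task scheduled by a work-conserving scheduler \emph{without} vertex-level priorities, producing an execution sequence in $\mathcal{S}_{wc}(G,m)\setminus\mathcal{S}_{u}(G,m)$ for which Theorem \ref{thm:our_bound} still certifies $B$, whereas the theorem of \cite{ueter2023parallel} offers no guarantee at all, since its derivation hinges on the vertex-level priorities that are absent in this regime. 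This shows our result covers strictly more scenarios with no loss in value.

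The main obstacle I anticipate is conceptual rather than computational: pinning down the precise meaning of ``dominance'' when the two expressions coincide. The entire argument rests on the inclusion $\mathcal{S}_{u}(G,m)\subseteq\mathcal{S}_{wc}(G,m)$, so the delicate step is to justify rigorously that imposing vertex-level priorities only \emph{shrinks} the space of admissible execution sequences, rather than producing schedules lying outside the work-conserving class. Once that inclusion is granted, monotonicity of suprema closes the weak-dominance direction immediately, and the superior case upgrades it to strict dominance.
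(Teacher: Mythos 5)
Your set-inclusion argument reproduces exactly one half of the paper's proof: the paper denotes the scheduler of \cite{ueter2023parallel} by $\mathcal{S}_1$ and the general work-conserving scheduler by $\mathcal{S}_0$, notes $\mathcal{S}_1(G,m) \subseteq \mathcal{S}_0(G,m)$, and concludes the containment of response-time sets in (\ref{equ:set_domi}). That part of your proposal is sound and matches the paper.

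The gap is your central premise that ``the dominance claim here cannot be a numerical comparison of two formulas.'' In this paper a multi-path bound is a bound-computation \emph{method}: a theorem together with the algorithm that produces the generalized path list (this is how the bounds are identified throughout, e.g.\ Theorem \ref{thm:chen_bound} \emph{and} Algorithm 1 of \cite{ueter2023parallel} versus Theorem \ref{thm:our_bound} \emph{and} Algorithm \ref{alg:computation}). The two methods do \emph{not} evaluate the common formula on the same path list: \cite{ueter2023parallel} computes its list with its own Algorithm 1, which is only a $(2-\frac{1}{w})$-approximation for Problem \ref{pro:path}, whereas the proposed bound uses the optimal minimum-cost-flow algorithm of Section \ref{sec:optimal}. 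The numerical values therefore genuinely differ --- in the paper's superior case (Fig. \ref{fig:dag_chen}) our bound is $4.1$ while the bound of \cite{ueter2023parallel} is $5$ --- and the paper's proof needs, and explicitly invokes, the optimality of its path-list computation to conclude that our value is never larger than theirs. Your proposal never invokes this optimality; because you fix a single shared path list, your argument cannot rule out that our method, run on its own chosen list, outputs a larger number than the method of \cite{ueter2023parallel} does on some task, in which case dominance would fail. To close the gap, add the numerical half: since (\ref{equ:our_bound}) and (\ref{equ:chen_bound}) coincide as formulas, neither carries the longest-path constraint, and by Theorem \ref{thm:computation} the list computed in Algorithm \ref{alg:computation} maximizes $\sum_{i=0}^{j} len(\mypath_i)$ over all generalized path lists of each cardinality, the value output by Algorithm \ref{alg:computation} is no larger than the value obtained from any list produced by Algorithm 1 of \cite{ueter2023parallel}; combining this with your set-inclusion argument yields the theorem. (Your proposed ``superior case'' --- an execution sequence outside $\mathcal{S}_1(G,m)$ --- witnesses only the broader validity domain, not the strict numerical improvement that the paper's superior case exhibits.)
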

\begin{proof}
On the one hand, since (\ref{equ:our_bound}) is the same as (\ref{equ:chen_bound}) and both bounds do not have the constraint of the longest path (i.e. without requiring that the first generalized path in the generalized path list should be the longest path),
and since our algorithm for computing the generalized path list in Section \ref{sec:optimal} is optimal, our bound is no larger than the bound of \cite{ueter2023parallel}.

On the other hand, the scheduling algorithm in \cite{ueter2023parallel} (denoted as $\mathcal{S}_1$) is a restricted version of the scheduling algorithm in this paper (denoted as $\mathcal{S}_0$).
For a DAG task $G$ scheduled on $m$ cores, we have
$\mathcal{S}_1(G, m) \subseteq \mathcal{S}_0(G, m)$.
Therefore,
\begin{equation}\label{equ:set_domi}
\{R(\eseq) \mid \eseq \in \mathcal{S}_1(G, m) \} \subseteq \{R(\eseq) \mid \eseq \in \mathcal{S}_0(G, m) \}
\end{equation}

In summary, we have less constraints on the scheduling algorithm, thus deriving a bound on a larger set of response times (see Equation \ref{equ:set_domi}).
What's more, our bound is no larger than the bound of \cite{ueter2023parallel}.
Therefore, our bound dominates the bound of \cite{ueter2023parallel}.
\end{proof}

\begin{superior*}
Fig. \ref{fig:dag_chen} shows a DAG task $G$. The length $len(G)=4$ and the volume $vol(G)=6$. Let the number of cores $m=2$.
Using the algorithm in Section \ref{sec:optimal}, a generalized path list $(\chain_i)_0^2$ with $\chain_0=(v_0, v_1)$,  $\chain_1=(v_3, v_4)$ and $\chain_2=(v_2)$ is computed. With $(\chain_i)_0^2$, the bound in (\ref{equ:our_bound}) is computed as
\begin{align*}
R   &\le \min\{4+(6-3)/2, 4+(6-3-2.9)/(2-1)\} \\
    &=\min\{5.5, 4.1\}=4.1
\end{align*}
Using Algorithm 1 of \cite{ueter2023parallel}, a generalized path list $(\chain_i)_0^2$ with $\chain_0=(v_0, v_2, v_4)$,  $\chain_1=(v_1)$ and $\chain_2=(v_3)$ is computed. With $(\chain_i)_0^2$, the bound in (\ref{equ:chen_bound}) is computed as
\begin{align*}
R   &\le \min\{4+(6-4)/2, 4+(6-4-1)/(2-1)\} \\
    &=\min\{5, 5\}=5
\end{align*}
For the DAG task in Fig. \ref{fig:dag_chen} scheduled on $m=2$ cores, our bound is smaller than the bound of \cite{ueter2023parallel}.
\end{superior*}

Although the bound of \cite{ueter2023parallel} is no larger than bounds of \cite{graham1969bounds, he2022bounding, he2023degree}, since \cite{ueter2023parallel} has more constraints on the scheduling algorithm to restrict the execution behavior of vertices, it is inappropriate to say that the bound of \cite{ueter2023parallel} dominates bounds of \cite{graham1969bounds, he2022bounding, he2023degree}.

\begin{theorem}\label{thm:rtss_date}
There is no dominance between the bound of \cite{he2022bounding} in Theorem \ref{thm:rtss_bound} and the bound of \cite{he2023degree} in Theorem \ref{thm:date_bound}.
\end{theorem}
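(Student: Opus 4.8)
The plan is to prove non-dominance by exhibiting two DAG tasks: one (which I would present as the Superior Case) on which the bound of \cite{he2022bounding} in Theorem~\ref{thm:rtss_bound} is strictly smaller, and one (the Inferior Case) on which the bound of \cite{he2023degree} in Theorem~\ref{thm:date_bound} is strictly smaller. Before constructing them, I would isolate the structural trade-off that makes both directions possible. For the \emph{same} generalized path list $(\mypath_i)_0^k$ and the same index $j$, since $vol(G)-\sum_{i=0}^{j}len(\mypath_i)\ge 0$ and $m-j\ge 1$ we have $\frac{vol(G)-\sum_{i=0}^{j}len(\mypath_i)}{m-j}\le vol(G)-\sum_{i=0}^{j}len(\mypath_i)$, so (\ref{equ:rtss_bound}) is term-by-term no larger than (\ref{equ:date_bound}). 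Hence the only way \cite{he2023degree} can beat \cite{he2022bounding} is by using a list that \cite{he2022bounding} is forbidden to use, namely one whose first path is not the longest path but whose total volume is larger; conversely, the only way \cite{he2022bounding} can beat \cite{he2023degree} is by exploiting its division by $m-j$ on a residual that \cite{he2023degree} must subtract undivided.

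For the Superior Case I would push the division advantage to its extreme by choosing a DAG of large width relative to $m$, so that neither bound's path list (capped at cardinality $m$) covers much of the volume. Concretely I would take $m=2$ and a DAG of $w>2$ mutually parallel unit-WCET vertices between a dummy source and sink, giving $len(G)=1$, $vol(G)=w$, and any two disjoint generalized paths covering only two vertices. Then Theorem~\ref{thm:date_bound} yields $1+w-2=w-1$, while the $j=0$ term of Theorem~\ref{thm:rtss_bound} yields $1+\frac{w-1}{2}$, which is strictly smaller once $w\ge 4$; I would check that $j=0$ is indeed the minimizing index (its term drops below the $j=1$ term precisely when $w>3$).

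For the Inferior Case I would engineer the longest path to be ``lonely'': a high-WCET path whose removal leaves a residual of width exceeding one, so that the partner forced alongside it by the longest-path constraint is short, while a different pair of disjoint paths threads through the same vertices and covers the entire volume. A clean instance is $m=2$ with vertices $p_1,p_2$ (high WCET) forming the unique longest path and two light vertices $r_1,r_2$ wired as $p_1\to r_1$ and $r_2\to p_2$, so that the pair $(p_1,r_1),(r_2,p_2)$ is vertex-disjoint and covers everything, whereas the longest path $(p_1,p_2)$ can only be completed by the short leftover $\{r_1,r_2\}$. Theorem~\ref{thm:date_bound} then returns exactly $len(G)$, beating the strictly larger value of Theorem~\ref{thm:rtss_bound}. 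I expect this case to be the main obstacle, because it requires simultaneously arranging that the unconstrained maximum-coverage pair \emph{excludes} the longest path (so its summed length strictly exceeds the longest-anchored sum) \emph{and} that this advantage survives the comparison against the divided $j=0$ term of (\ref{equ:rtss_bound}); getting both inequalities to hold at once pins down the WCETs and ancestor relations quite tightly, and I would finish by verifying the minimum over $j$ explicitly for each figure.
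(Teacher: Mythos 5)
Your proposal is correct and takes essentially the same route as the paper: the paper also proves non-dominance via two counterexamples, namely a fan of parallel vertices between a dummy source and sink (Fig.~\ref{fig:dag_date}), where the divided residual in (\ref{equ:rtss_bound}) beats the undivided subtraction in (\ref{equ:date_bound}), and the ``lonely longest path'' DAG of Fig.~\ref{fig:ill_nested}, where a full-cover generalized path list excluding the longest path makes (\ref{equ:date_bound}) collapse to $len(G)$ while (\ref{equ:rtss_bound}) is stuck with a short second path. Your two constructions are structurally identical to those figures (up to the choice of WCETs and the fan width), and your bound computations in both directions check out.
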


\begin{superior*}
The superior case of Theorem \ref{thm:our_date} is also a superior case of Theorem \ref{thm:rtss_date}.
In Fig. \ref{fig:dag_date}, for generalized path list $(\chain_i)_0^1$ with $\chain_0=(v_0, v_1, v_5)$ and $\chain_1=(v_2)$, $\chain_0$ is the longest path.
So, the bound in (\ref{equ:rtss_bound}) is 6 and the bound in (\ref{equ:date_bound}) is 7.
Therefore, for the DAG task in Fig. \ref{fig:dag_date} scheduled on $m=2$ cores, the bound of \cite{he2022bounding} is smaller than the bound of \cite{he2023degree}.
\end{superior*}

\begin{inferior*}
The example in Fig. \ref{fig:ill_nested} of Section \ref{sec:problem} explains how the bound of \cite{he2022bounding} can be larger than the bound of \cite{he2023degree}.
For generalized path list $(\chain_i)_0^1$ with $\chain_0=(v_0, v_3)$ and $\chain_1=(v_1)$, $\chain_0$ is the longest path.
So, the bound in (\ref{equ:rtss_bound}) is 5.
Note that the bound of \cite{he2023degree} does not have the constraint of the longest path. So we can use a different generalized path list to compute the bound of \cite{he2023degree}.
For generalized path list $(\chain_i)_0^1$ with $\chain_0=(v_0, v_1)$ and $\chain_1=(v_2, v_3)$,
the bound in (\ref{equ:date_bound}) is computed as
$$
R \le 4+6-6=4
$$
Therefore, for the DAG task in Fig. \ref{fig:ill_nested} scheduled on $m=2$ cores, the bound of \cite{he2022bounding} is larger than the bound of \cite{he2023degree}.
\end{inferior*}

In addition, \cite{he2022bounding} shows that the bound of \cite{he2022bounding} dominates Graham's bound. By Theorem \ref{thm:our_rtss}, our bound also dominates Graham's bound.
Moreover, \cite{he2023degree} shows that there is no dominance between the bound of \cite{he2023degree} and Graham's bound.
In summary, we provide an overall picture in Fig. \ref{fig:dominance} concerning the dominance and nondominance among the three existing multi-path bounds, the proposed multi-path bound and the classic Graham's bound.

\begin{figure}[t]
  \centering
  \includegraphics[width=0.75\linewidth]{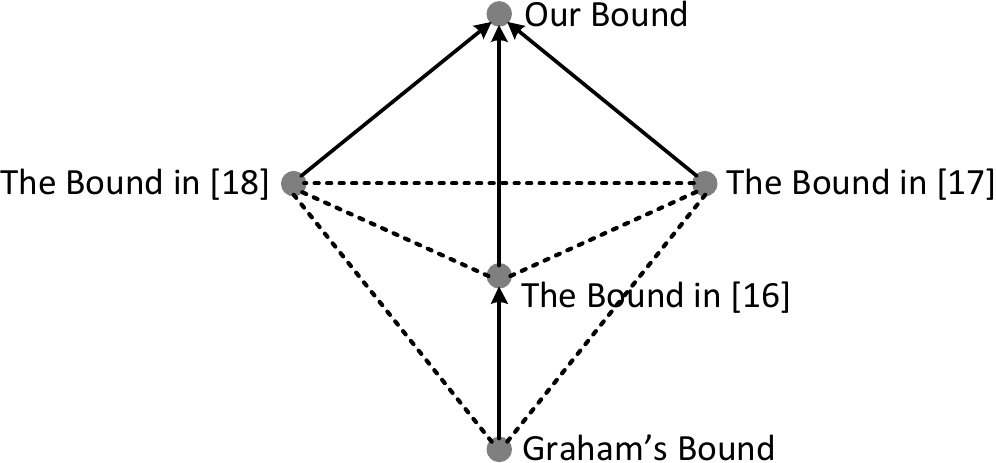}\\
  \caption{A hierarchy of multi-path bounds. Solid lines with arrows indicate dominance. Dashed lines indicate nondominance.}
  \label{fig:dominance}
\end{figure}

\section{The Sustainability of Multi-Path Bounds}
\label{sec:sustain}
This section analyzes the sustainability of multi-path bounds.
Intuitively, sustainability means that a response time bound is still safe when the system parameters get better (for example, the WCET of vertices decreases or the number of cores increases).
Obviously, for the four multi-path bounds (i.e. \cite{he2022bounding, he2023degree, ueter2023parallel} and our bound), when the number of cores increases, these response time bounds will not increase, thus still being safe bounds.
In the following, this paper focuses on sustainability with respect to the WCET of vertices in the DAG task.
We distinguish two types of sustainability.

\begin{definition}[Sustainability \cite{baruah2006sustainable}]\label{def:sustain}
A response time bound is sustainable if the bound holds true when the WCET of vertices decreases.
\end{definition}

\begin{definition}[Self-Sustainability \cite{baker2009sustainable}]\label{def:self}
A response time bound is self-sustainable if the bound does not increase when the WCET of vertices decreases.
\end{definition}

The above two definitions are adapted from respective literature regarding to our problem setting.
At first glance, it seems that the two definitions are the same. However, they are not: self-sustainability is a stronger notion than sustainability. Next, we introduce some notations to illustrate the difference formally.

For a DAG task $G$ and the number of cores $m$, a response time bound $B(G, m)$ can be computed according to a specific method under a designated scheduling algorithm. Here, function $B(\boldsymbol{\cdotp}, \boldsymbol{\cdot})$ denotes a specific computing method and a designated scheduling algorithm.
For example, this paper proposes a method for computing the response time bound under work-conserving scheduling in Section \ref{sec:bound} and Section \ref{sec:computation}.
For the DAG task $G=(V, E)$, when the WCET of vertices decreases, we have a new DAG task $G'=(V, E)$. $G'$ has the same vertex set and edge set as $G$. But $G'$ has different WCETs for vertices $c'(v)$ satisfying $\forall v \in V, c'(v) \le c(v)$.

If $B(G, m)$ holds true for $G'$, i.e., $B(G, m)$ is still an upper bound on the response times of $G'$ executing on $m$ cores under the designated scheduling algorithm, we say that $B(G, m)$ is \emph{sustainable}.
If the bound does not increase, i.e. $B(G, m) \ge B(G', m)$, we say that $B(G, m)$ is \emph{self-sustainable}.

By the definition of response time bound, $B(G', m)$ is an upper bound on the response times of $G'$ executing on $m$ cores under the designated scheduling algorithm. If $B(G, m) \ge B(G', m)$, it means that $B(G, m)$ is also an upper bound on the response times of $G'$ executing on $m$ cores under the designated scheduling algorithm. Therefore, self-sustainability implies sustainability. But the opposite is not true: sustainability does not imply self-sustainability.

In the context of real-time scheduling, sustainability is critical for the correctness of a response time bound. Without sustainability, a response time bound cannot even be called to be correct. Many response time bounds are sustainable. For example, the multi-path bound in \cite{ueter2023parallel} is proved to be sustainable in Corollary 7 of \cite{ueter2023parallel}.
For our bound in Theorem~\ref{thm:our_bound}, since we inherently take into consideration the behavior where vertices may execute for less than its WCET, the correctness of the proof in Section \ref{sec:bound} directly means sustainability.
We mention that all four multi-path bounds and Graham's bound are sustainable.

However, it is still unknown whether these multi-path bounds are self-sustainable or not.
As stated in \cite{baker2009sustainable}, self-sustainability is important in the incremental and interactive design process, which is typically used in the design of real-time systems and in the evolutionary development of fielded systems, such as the real-time system in \cite{yi2022mimos}. In this section, we try to analyze the self-sustainability of multi-path bounds.

\begin{theorem}\label{thm:rtss_sustain}
The bound in \cite{he2022bounding} (Theorem \ref{thm:rtss_bound} and Algorithm~2 of \cite{he2022bounding}) is not self-sustainable.
\end{theorem}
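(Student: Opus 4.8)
Because the statement is negative, the plan is to prove it by exhibiting an explicit counterexample: a DAG $G$, a core count $m$, and a WCET-reduced copy $G'$ (with $c'(v)\le c(v)$ for all $v$) for which the bound of \cite{he2022bounding} strictly increases, i.e.\ $B(G,m) < B(G',m)$, where $B(\cdot,\cdot)$ is the bound computed by Theorem \ref{thm:rtss_bound} together with Algorithm~2 of \cite{he2022bounding}. The mechanism I would exploit is precisely the \emph{constraint of the longest path}: in (\ref{equ:rtss_bound}) the first generalized path $\mypath_0$ is forced to be the longest complete path of the graph. Lowering the WCETs of vertices on the current longest path can promote a \emph{different} complete path to longest; Algorithm~2 must then adopt this new path as $\mypath_0$, and the resulting generalized path list can have a strictly smaller total length $\sum_i len(\mypath_i)$, which in turn raises the bound.

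For the construction I would use a small ``crossing'' DAG: two parallel chains, a \emph{heavy} chain and a \emph{light} chain sharing only the dummy source and sink, plus a single cross edge that creates a third complete path $\mypath_Q$ reusing one vertex from each chain. The WCETs are tuned so that in $G$ the heavy chain is the unique longest path and the light chain survives intact as a long disjoint second generalized path; this makes the $j{=}1$ term of (\ref{equ:rtss_bound}) small and gives a tight bound $B(G,m)$. I would then reduce the WCET of a single vertex on the heavy chain by just enough to make $\mypath_Q$ the new unique longest path of $G'$. Since $\mypath_Q$ consumes one vertex from \emph{each} chain, after removing $\mypath_0^{G'}=\mypath_Q$ the leftover vertices have no ancestor/descendant relation and therefore cannot be merged into one generalized path; the forced second path $\mypath_1$ is short, so $vol(G')-len(\mypath_1)$ becomes large.

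The verification is then routine: evaluate (\ref{equ:rtss_bound}) at $j\in\{0,1\}$ for both $G$ and $G'$, confirm the $j{=}1$ term governs in each case, and check the strict inequality $B(G,m) < B(G',m)$. A concrete instance on $m=2$ works: a heavy chain of two weight-$3$ vertices, a light chain of two weight-$2$ vertices, and one cross edge, which gives $B(G,2)=6$; dropping the first heavy vertex from $3$ to $1.5$ flips the longest path onto $\mypath_Q$ and yields $B(G',2)=6.5>6$.

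The main obstacle is the tension built into self-sustainability itself: decreasing any WCET also decreases both $len(G)$ and $vol(G)$, and both decreases tend to \emph{lower} the bound. The counterexample must therefore reduce the WCET by the \emph{minimal} amount needed to flip the longest path, so that the combinatorial penalty of being forced onto a worse generalized path list (the drop in $\sum_i len(\mypath_i)$) strictly outweighs the favorable decrease in $len(G)$ and $vol(G)$. A secondary point requiring care is to argue that Algorithm~2 of \cite{he2022bounding}, operating under the longest-path constraint, indeed selects exactly the path lists claimed for $G$ and $G'$, so that the computed bounds are the ones the algorithm actually reports.
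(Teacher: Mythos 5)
Your proposal is correct and takes essentially the same approach as the paper: the paper's proof is likewise a counterexample on $m=2$ built on a small ``crossing'' DAG (its Fig.~\ref{fig:ill_nested}), where lowering one vertex's WCET (from $2.1$ to $1$) flips the longest path onto the cross path, so that Algorithm~2 of \cite{he2022bounding} is forced into a path list whose leftover vertices are incomparable, raising the bound from $4.1$ to $5$. Your concrete instance checks out ($B(G,2)=6 < 6.5 = B(G',2)$), including the point you flag about the greedy path-list selection, so the argument stands as given.
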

\begin{proof}
By Definition \ref{def:self}, to prove that a bound is not self-sustainable, it is sufficient to construct a counter-example.
Let $G$ be the DAG task in Fig. \ref{fig:ill_nested} but with $c(v_1)=2.1$. When the WCET of vertices in $G$ decreases, let $G'$ be exactly the DAG task in Fig. \ref{fig:ill_nested}. Let the number of cores $m=2$.

For $G$, using Algorithm 2 of \cite{he2022bounding}, a generalized path list $(\chain_i)_0^1$ with $\chain_0=(v_0, v_1)$ and $\chain_1=(v_2, v_3)$ is computed. With $(\chain_i)_0^1$, by (\ref{equ:rtss_bound}), a response time bound $B(G, m)$ is computed.
\begin{align*}
B(G, m) &=\min\{4.1+(7.1-4.1)/2, \\ & \qquad \qquad \qquad \qquad 4.1+(7.1-4.1-3)/(2-1)\} \\
        &=\min\{5.6, 4.1\}=4.1
\end{align*}

For $G'$, using Algorithm 2 of \cite{he2022bounding}, a generalized path list $(\chain_i)_0^2$ with $\chain_0=(v_0, v_3)$,  $\chain_1=(v_1)$ and $\chain_2=(v_2)$ is computed. With $(\chain_i)_0^2$, by (\ref{equ:rtss_bound}), a response time bound $B(G', m)$ is computed.
\begin{align*}
B(G', m) &=\min\{4+(6-4)/2, 4+(6-4-1)/(2-1)\} \\
        &=\min\{5, 5\}=5
\end{align*}

When the WCET of vertices in $G$ decreases, we have $B(G, m) < B(G', m)$, which means that the bound increases.
Therefore, the bound in \cite{he2022bounding} is not self-sustainable.
\end{proof}

\begin{theorem}\label{thm:chen_sustain}
The bound in \cite{ueter2023parallel} (Theorem \ref{thm:chen_bound} and Algorithm~1 of \cite{ueter2023parallel}) is not self-sustainable.
\end{theorem}
\begin{proof}
We also use a counter-example to prove this theorem.
Let $G$ be the DAG task in Fig. \ref{fig:dag_chen} but with $c(v_1)=2.1$. When the WCET of vertices in $G$ decreases, let $G'$ be exactly the DAG task in Fig. \ref{fig:dag_chen}. Let the number of cores $m=2$.

For $G$, using Algorithm 1 of \cite{ueter2023parallel}, a generalized path list $(\chain_i)_0^2$ with $\chain_0=(v_0, v_1)$,  $\chain_1=(v_3, v_4)$ and $\chain_2=(v_2)$ is computed. With $(\chain_i)_0^2$, by (\ref{equ:chen_bound}), a response time bound $B(G, m)$ is computed.
\begin{align*}
B(G, m) &=\min\{4.1+(7.1-4.1)/2, \\ & \qquad \qquad \quad \qquad 4.1+(7.1-4.1-2.9)/(2-1)\} \\
        &=\min\{5.6, 4.2\}=4.2
\end{align*}

For $G'$, using Algorithm 1 of \cite{ueter2023parallel}, a generalized path list $(\chain_i)_0^2$ with $\chain_0=(v_0, v_2, v_4)$,  $\chain_1=(v_1)$ and $\chain_2=(v_3)$ is computed. With $(\chain_i)_0^2$, by (\ref{equ:chen_bound}), a response time bound $B(G', m)$ is computed.
\begin{align*}
B(G', m) &=\min\{4+(6-4)/2, 4+(6-4-1)/(2-1)\} \\
        &=\min\{5, 5\}=5
\end{align*}

When the WCET of vertices in $G$ decreases, we have $B(G, m) < B(G', m)$, which means that the bound increases.
Therefore, the bound in \cite{ueter2023parallel} is not self-sustainable.
\end{proof}

Note that in the above two proofs, after using the same computing method for $G$ and $G'$, although the response time bound of $G'$ (i.e. $B(G', m)$) is larger than that of $G$ (i.e. $B(G, m)$), $B(G, m)$ is still a safe response time bound for $G'$, since this is implied by the sustainability (Definition \ref{def:sustain}) of the two bounds in \cite{he2022bounding, ueter2023parallel}.
The reason that leads to $B(G, m) < B(G', m)$ lies in the bound-computation method, not in the scheduling algorithm itself.
In other words, there is some sort of ``unsustainability'' within the bound-computation methods of \cite{he2022bounding, ueter2023parallel}.

\begin{theorem}\label{thm:our_sustain}
The proposed response time bound in Theorem~\ref{thm:our_bound} and Algorithm~\ref{alg:computation} is self-sustainable.
\end{theorem}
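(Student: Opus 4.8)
The plan is to show directly that $B(G,m) \ge B(G',m)$ whenever $c'(v) \le c(v)$ for all $v$, which by Definition \ref{def:self} is precisely self-sustainability (sustainability itself already follows from the discussion preceding this theorem). The first observation I would establish is that $G$ and $G'$ share the same vertex set, edge set, and hence the same ancestor relation; consequently the width $w$ is identical for both, since it is a purely structural quantity independent of the WCETs. Therefore $n=\min\{w,m\}$ is the same for both tasks, and both bounds are minima of the quantities $R_j \coloneqq len(\cdot)+\frac{vol(\cdot)-W_j}{m-j}$ over the same index range $j\in[0,n-1]$, where $W_j$ denotes the maximum volume of a generalized path list of cardinality $j+1$ (Line 4 of Algorithm \ref{alg:computation}). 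Moreover, the family of candidate generalized path lists is identical for $G$ and $G'$; only the volume assigned to a fixed list differs. Since $\min_j R_j(G)\ge\min_j R_j(G')$ follows from $R_j(G)\ge R_j(G')$ holding for every $j$, it suffices to fix $j$ and compare the two terms.

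For a fixed $j$, I would write
\begin{equation*}
R_j(G)-R_j(G') = \bigl(len(G)-len(G')\bigr) + \frac{\bigl(vol(G)-vol(G')\bigr)-\bigl(W_j(G)-W_j(G')\bigr)}{m-j},
\end{equation*}
and argue that both summands are nonnegative. The first is nonnegative because the length of every complete path can only decrease when the WCETs are reduced, so the longest-path lengths satisfy $len(G)\ge len(G')$. For the numerator of the second, the key step is the inequality $W_j(G)-W_j(G') \le vol(G)-vol(G')$. Let $L^*$ be an optimal (maximum-volume) generalized path list of cardinality $j+1$ for $G$, so $W_j(G)=\sum_{v\in L^*}c(v)$. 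Because the ancestor relation is unchanged, $L^*$ is also a feasible generalized path list for $G'$, whence $W_j(G')\ge \sum_{v\in L^*}c'(v)$. Subtracting gives $W_j(G)-W_j(G') \le \sum_{v\in L^*}\bigl(c(v)-c'(v)\bigr) \le \sum_{v\in V}\bigl(c(v)-c'(v)\bigr) = vol(G)-vol(G')$, using $c(v)\ge c'(v)\ge 0$ and $L^*\subseteq V$. Hence the numerator is nonnegative, and since $j\le n-1\le m-1$ forces $m-j\ge 1>0$, the whole fraction is nonnegative.

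Combining the two nonnegative summands yields $R_j(G)\ge R_j(G')$ for all $j\in[0,n-1]$, and taking minima gives $B(G,m)\ge B(G',m)$, establishing self-sustainability. The crux, and the only nonroutine step, is the volume inequality $W_j(G)-W_j(G')\le vol(G)-vol(G')$; it hinges precisely on the \emph{optimality} of the list computed in Line 4 of Algorithm \ref{alg:computation}, since evaluating the $G$-optimal list under the smaller WCETs can lose at most the total WCET reduction. This is exactly what fails for the heuristic list constructions of \cite{he2022bounding, ueter2023parallel}, where the list selected for $G'$ may be entirely unrelated to the one selected for $G$, which is the mechanism behind the counterexamples of Theorems \ref{thm:rtss_sustain} and \ref{thm:chen_sustain}.
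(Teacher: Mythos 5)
Your proof is correct and follows essentially the same route as the paper's: fix the cardinality index $j$, use $len(G)\ge len(G')$ for the path-length term, and for the volume term exploit exactly the paper's key observation that the $G$-optimal generalized path list remains feasible for $G'$ (so $W_j(G')\ge \sum_{v\in L^*}c'(v)$), combined with WCET monotonicity over the remaining vertices. The only difference is presentational: where you telescope directly via $W_j(G)-W_j(G')\le \sum_{v\in L^*}\bigl(c(v)-c'(v)\bigr)\le vol(G)-vol(G')$, the paper reaches the equivalent inequality (its equation (\ref{equ:sustain})) through a decomposition of $V$ into the four disjoint sets $V^C$, $V^I$, $V^D_0$, $V^D_1$ and a chain of set-wise comparisons; the substance of the two arguments is identical.
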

\begin{proof}
Let $G=(V, E)$ denote an arbitrary DAG task and $G'=(V, E)$ denote the DAG task when the WCET of vertices in $G$ decreases. We have $len(G) \ge len(G')$.

Recall that in Section \ref{sec:dag}, for an arbitrary vertex set $U \subseteq V$, $vol(U)$ is defined to be $\sum_{v\in U} c(v)$ in task $G$.
Now, we have two DAG tasks $G$ and $G'$ with the same vertex set $V$ but with different WCETs $c(v)$ and $c'(v)$.
So we define $vol'(U) \coloneqq \sum_{v\in U} c'(v)$. Function $vol'(\boldsymbol{\cdot})$ is introduced to denote the volume of a vertex set for $G'$.

Let $w$ denote the width of $G$. Using the method in Section \ref{sec:optimal}, $\forall j \in [0, w-1]$, we can compute a generalized path list $(\chain_i)_0^j$ with the maximum volume in $G$, and a generalized path list $(\chain'_i)_0^j$ with the maximum volume in $G'$.
Let $V_0$ denote the vertex set including vertices in $(\chain_i)_0^j$ and $V_1$ denote the vertex set including vertices in $(\chain'_i)_0^j$.

With $len(G) \ge len(G')$, by (\ref{equ:dag_bound}) and Line 5 of Algorithm~\ref{alg:computation}, to prove the self-sustainability, it is sufficient to prove that (\ref{equ:sustain}) holds.
\begin{equation}\label{equ:sustain}
vol(V)-vol(V_0) \ge vol'(V)-vol'(V_1)
\end{equation}
We define $V^C \coloneqq V \setminus (V_0 \cup V_1)$, $V^I \coloneqq V_0 \cap V_1$, $V^D_0 \coloneqq V_0 \setminus V^I$, $V^D_1 \coloneqq V_1 \setminus V^I$. These four vertex sets are mutually disjoint and we have $V = V^C \cup V^I \cup V^D_0 \cup V^D_1$.

For cardinality $j+1$, $(\chain'_i)_0^j$ is the generalized path list with the maximum volume in $G'$. We have
\begin{align*}
vol'(V_0) &\le vol'(V_1) \\
\Longrightarrow vol'(V^D_0)+vol'(V^I) &\le vol'(V^D_1)+ vol'(V^I) \\
\Longrightarrow vol'(V^D_0) &\le vol'(V^D_1)
\end{align*}
Compared to $G$, the WCETs of some vertices decrease in $G'$. Therefore, $vol'(V^D_1) \le vol(V^D_1)$, which means that $vol'(V^D_0) \le vol(V^D_1)$. We have
\begin{align*}
 vol(V^D_1) &\ge vol'(V^D_0) \\
\Longrightarrow vol(V^D_1)+vol(V^C) &\ge vol'(V^D_0)+ vol'(V^C) \\
\Longrightarrow vol(V^D_1 \cup V^C) &\ge vol'(V^D_0 \cup V^C) \\
\Longrightarrow vol(V \setminus V_0) &\ge vol'(V \setminus V_1) \\
\Longrightarrow vol(V)-vol(V_0) &\ge vol'(V)-vol'(V_1)
\end{align*}
which is (\ref{equ:sustain}). The theorem is proved.
\end{proof}

It is unknown whether the bound in \cite{he2023degree} is self-sustainable or not.
Table \ref{tab:sustain} summarizes the sustainability and self-sustainability of multi-path bounds and Graham's bound.
Our bound is the only multi-path bound that is proved to be self-sustainable.

\begin{table}[t]
\centering
\caption{The Sustainability of Multi-Path Bounds}
\label{tab:sustain}
\begin{tabular}{lcc}
\hline
                            &\textbf{Sustainability}    &\textbf{Self-Sustainability} \\
\hline
Graham's bound                          &\checkmark         &\checkmark     \\
The bound in \cite{he2023degree}        &\checkmark         &\textsf{?}     \\
The bound in \cite{he2022bounding}      &\checkmark         &\ding{53}      \\
The bound in \cite{ueter2023parallel}   &\checkmark         &\ding{53}      \\
Our bound                               &\checkmark         &\checkmark     \\
\hline
\end{tabular}
\end{table}


\section{Evaluation}
\label{sec:evaluation}
This section evaluates the performance of the proposed method using randomly generated DAG tasks.
Section \ref{sec:eval_bound} compares the proposed response time bound with other multi-path bounds.
Section \ref{sec:eval_sched} evaluates the schedulability of task sets for our bound applied in federated scheduling and other state-of-the-art methods of scheduling DAG tasks.

\subsection{Response Time Bound of DAG Tasks}
\label{sec:eval_bound}

\begin{figure*}[t]
\centering
\subfloat[]{
    \includegraphics[width=0.25\linewidth]{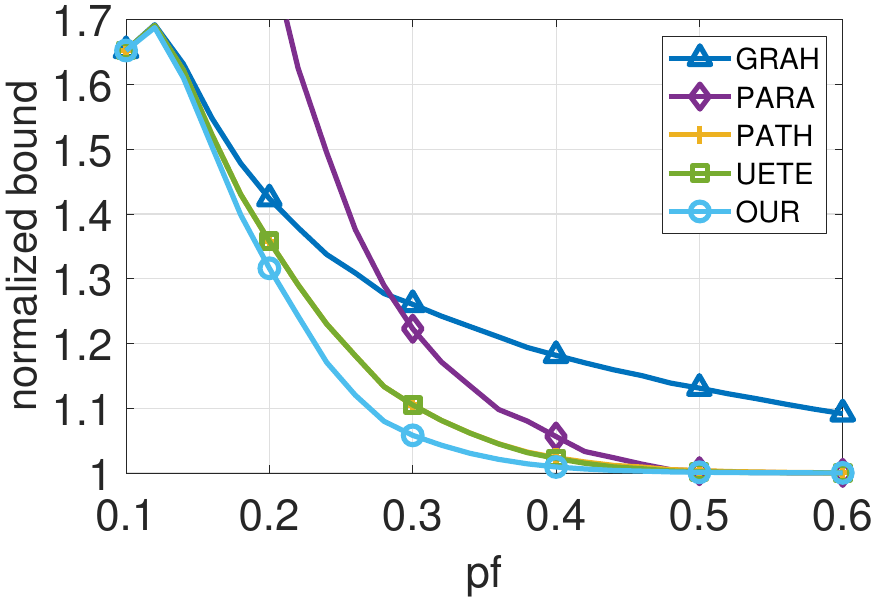}
    \label{fig:p_4}
}
\hfil
\subfloat[]{
    \includegraphics[width=0.25\linewidth]{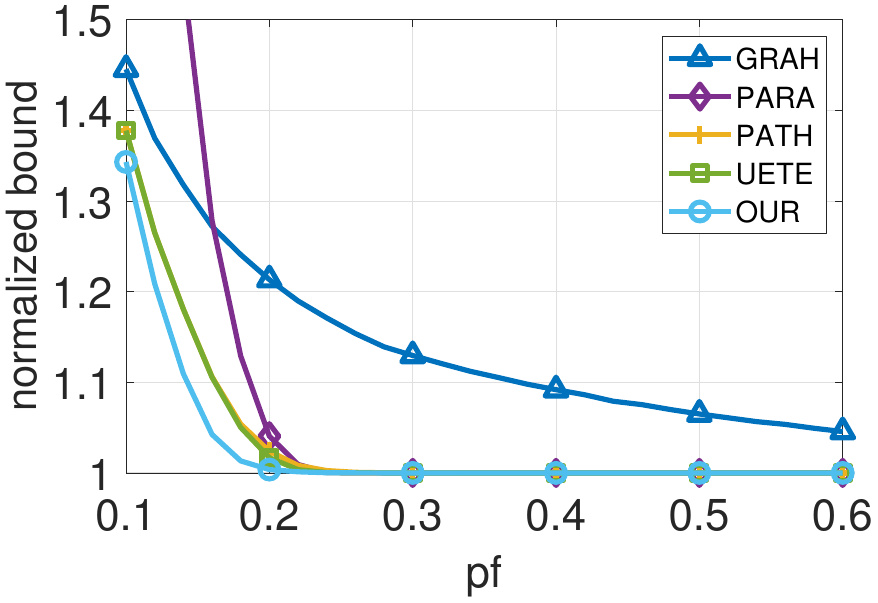}
    \label{fig:p_8}
}
\hfil
\subfloat[]{
    \includegraphics[width=0.25\linewidth]{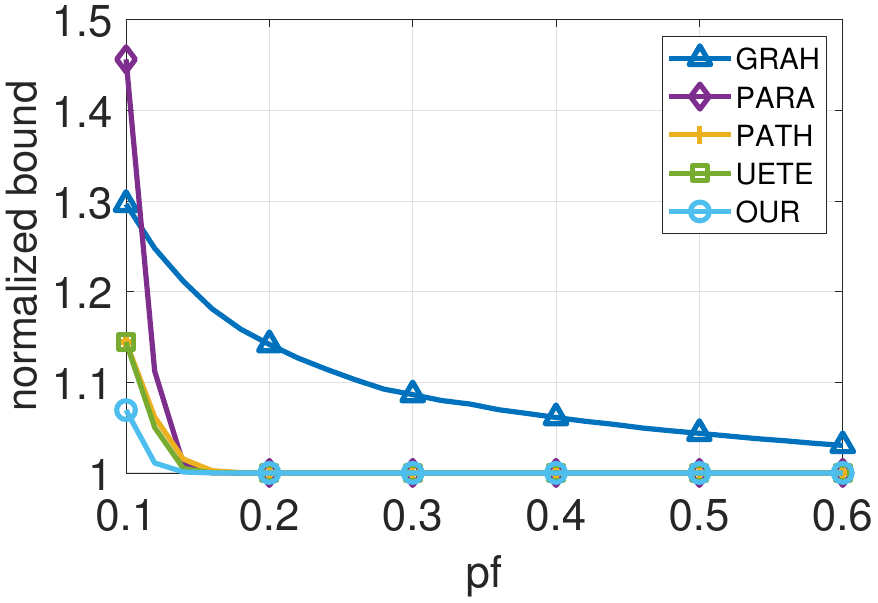}
    \label{fig:p_12}
}
\caption{{\diff Normalized bound with changing the parallelism factor. (a) $m=4$. (b) $m=8$. (c) $m=12$.}}
\label{fig:eval_bound}
\end{figure*}

\begin{figure*}[t]
\centering
\subfloat[]{
    \includegraphics[width=0.205\linewidth]{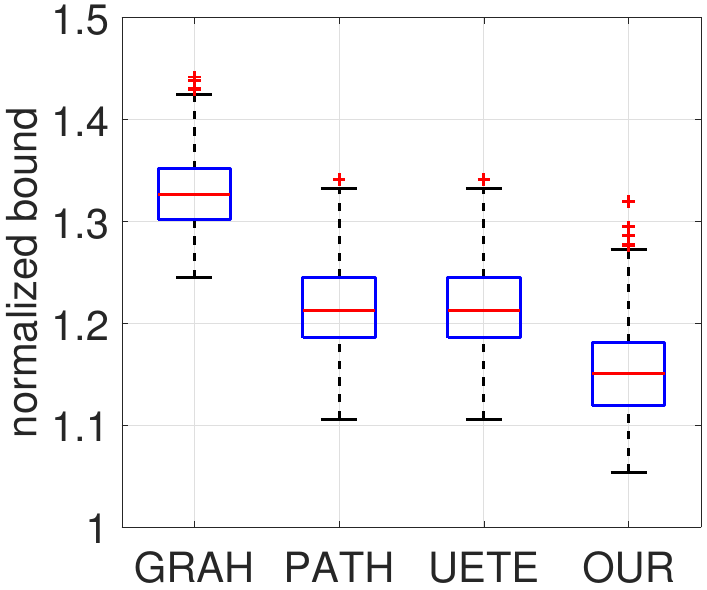}
    \label{fig:b_4}
}
\hfil
\subfloat[]{
    \includegraphics[width=0.205\linewidth]{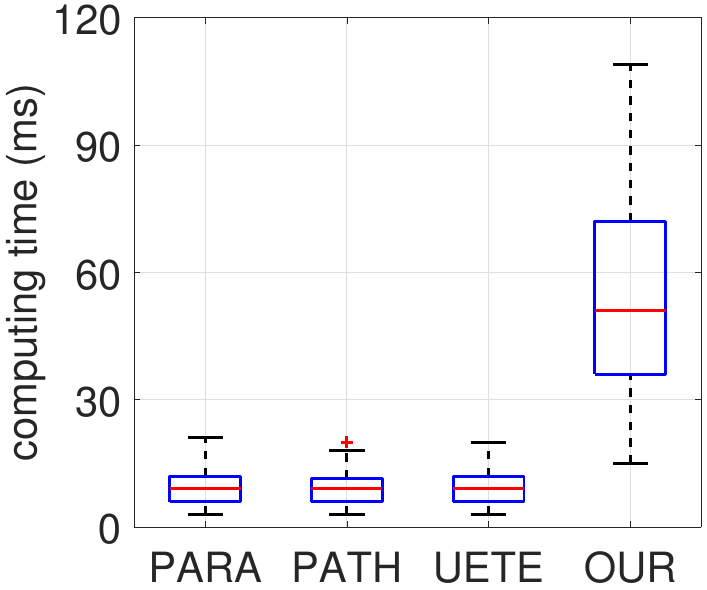}
    \label{fig:t_4}
}
\hfil
\subfloat[]{
    \includegraphics[width=0.25\linewidth]{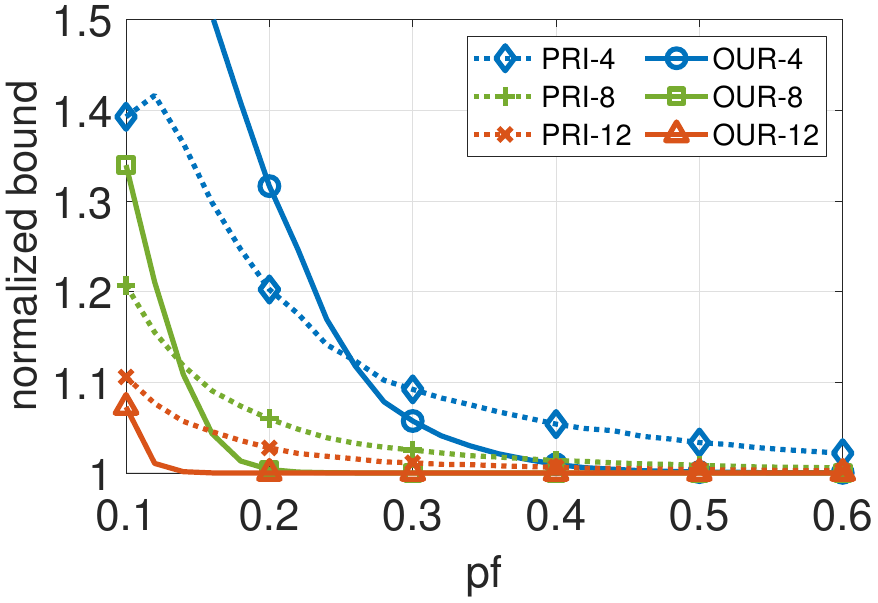}
    \label{fig:p_pr}
}
\hfil
\subfloat[]{
    \includegraphics[width=0.25\linewidth]{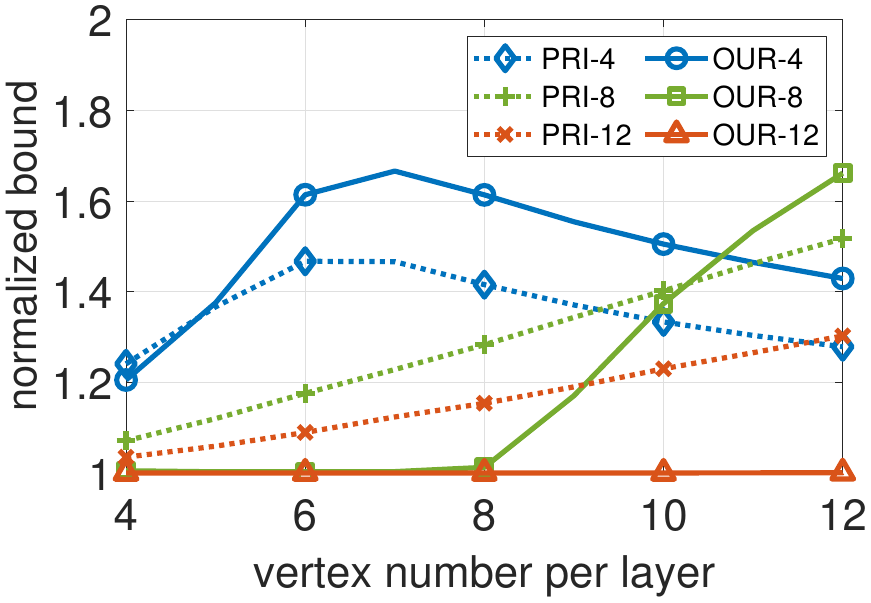}
    \label{fig:p_ly}
}
\caption{{\diff (a) The variation of normalized bounds. $m=4$, $\mathit{pf} \in [0.2, 0.3]$. (b) The computing time of different bounds. $m=4$, $\mathit{pf} \in [0.2, 0.3]$. (c) Compare our multi-path bound with the priority-based bound when $m=4, 8, 12$. PRI-4 means the bound PRI on the number of cores $m=4$. (d) Comparison using the layer-by-layer method to generate DAGs. $m=4, 8, 12$; the number of layers is in $[5, 15]$; the probability of adding edges between layers is 0.5; the WCET of vertices $c(v) \in [5, 100]$.}}
\label{fig:eval_other}
\end{figure*}

This subsection compares the following response time bounds of a DAG task.
\begin{itemize}
    \item \textsf{GRAH}. The classic result Graham's bound in \cite{graham1969bounds}.
    {\diff \item \textsf{PRI}.  The priority-based bound in \cite{zhao2020dag} that does not utilize the multi-path technique.}
    \item \textsf{PARA}. The multi-path bound in \cite{he2023degree}: Theorem \ref{thm:date_bound} and Algorithm 3 of \cite{he2023degree}.
    \item \textsf{PATH}. The multi-path bound in \cite{he2022bounding}: Theorem \ref{thm:rtss_bound} and Algorithm~2 of \cite{he2022bounding}.
    \item \textsf{UETE}. The multi-path bound in \cite{ueter2023parallel}: Theorem \ref{thm:chen_bound} and Algorithm~1 of \cite{ueter2023parallel}.
    \item \textsf{OUR}. The proposed multi-path bound: Theorem \ref{thm:our_bound} and Algorithm \ref{alg:computation}.
\end{itemize}
Note that the scheduling algorithms for \textsf{GRAH}, \textsf{PARA}, \textsf{PATH}, \textsf{OUR} are the same and only assume the work-conserving property. However, the scheduling algorithms for \textsf{PRI} and \textsf{UETE} are more complex and rely on vertex-level priorities to control the execution behavior of vertices in the DAG task.
\textsf{PARA}, \textsf{PATH}, \textsf{UETE} are all state-of-the-art results regarding the response time bound of a DAG task. Prior to this work, these three bounds have not been compared to each other.
In the evaluation, all bounds are normalized to a theoretical lower bound $\max\{len(G), \frac{vol(G)}{m}\}$. No response time bound for a DAG task $G$ scheduled on $m$ cores can be less than this lower bound.

\noindent
\textbf{Task Generation.}
The DAG tasks are generated using the Erd\"os-R\'enyi method \cite{cordeiro2010random}.
First, the number of vertices $|V|$ is randomly chosen in $[150, 250]$.
Second, for each pair of vertices $v_i, v_j$ and $i<j$, it generates a random value in $[0, 1]$.
If this generated value is less than a predefined \emph{parallelism factor} $\mathit{pf}$, an edge $(v_i, v_j)$ is added to the graph.
The larger $\mathit{pf}$, which means that there are more edges, the more sequential the graph is.
When adding edges, we ensure $i<j$ to avoid loops in the generated graph.
Now the topology of the graph (i.e. an adjacency matrix) is generated.
{ \diff
If the generated graph has multiple source or sink vertices, a vertex with zero WCET is added to ensure that the graph has single source or sink.
The WCET of each vertex $c(v)$ is randomly chosen in $[5, 100]$.}

Fig. \ref{fig:eval_bound} reports the normalized bound of different methods with changing the number of cores $m$ and the parallelism factor $\mathit{pf}$. For each data point in Fig.~\ref{fig:eval_bound}, we randomly generate 500 tasks to compute the average normalized bound.
Since all bounds are normalized to the theoretical lower bound as stated before, no normalized bounds can be less than 1 in the figure.
When $\mathit{pf}$ is small, the DAG task is highly parallel and has a large degree of parallelism. Since \textsf{PARA} heavily relies on the degree of parallelism of the DAG task, \textsf{PARA} becomes quite large when $\mathit{pf}$ is small. Therefore, for small values of $\mathit{pf}$, the data of \textsf{PARA} are not completely presented in Fig. \ref{fig:eval_bound}.
\textsf{UETE} is almost the same as (only slightly better than) \textsf{PATH}. This is because (\ref{equ:chen_bound}) in Theorem \ref{thm:chen_bound} is exactly the same as (\ref{equ:rtss_bound}) in Theorem \ref{thm:rtss_bound}. The only difference regarding the computation of the two bounds lies in the algorithms for computing the generalized path list: the algorithm for \textsf{UETE} (i.e. Algorithm~1 of \cite{ueter2023parallel}) leverages the degree of parallelism, while the algorithm for \textsf{PATH} (i.e. Algorithm~2 of \cite{he2022bounding}) does not leverage it.
\textsf{OUR}, through eliminating the constraint of the longest path and thus enabling the optimal computation of generalized path list, has the best performance, pushing the limits of multi-path bounds.
For most parts of Fig. \ref{fig:p_12} (such as $\mathit{pf} \ge 0.2$), our bound is the same as other multi-path bounds. This is simply because all multi-path bounds are equal to the theoretical lower bound. No methods can further reduce the response time bound in this case.
We can see that the data reported in Fig. \ref{fig:eval_bound} are consistent with theoretical analysis, i.e. the dominance among multi-path bounds in Section \ref{sec:dominance}.

In Fig. \ref{fig:eval_bound}, with the increase of $\mathit{pf}$, all bounds decrease and all multi-path bounds approach the theoretical lower bound (i.e. normalized bounds approach 1 in Fig. \ref{fig:eval_bound}). This is because when the parallelism factor $\mathit{pf}$ increases, there are more edges and the generated DAG becomes more sequential. So the degree of parallelism is more likely to be less than the number of cores. In this case, multi-path bounds will approach the length of the longest path, which is a theoretical lower bound.
This is also the reason why multi-path bounds approach 1 more quickly for a larger number of cores, since for a larger number of cores $m$, the degree of parallelism is also more likely to be less than $m$. We can discern this trend by comparing Figs.~\ref{fig:p_4}, \ref{fig:p_8}, \ref{fig:p_12}. For example, multi-path bounds are near to 1 for $\mathit{pf} \ge 0.5$ in Fig.~\ref{fig:p_4}; they are near to 1 for $\mathit{pf} \ge 0.3$ in Fig.~\ref{fig:p_8}.
In Fig.~\ref{fig:p_4}, for $\mathit{pf} \le 0.15$, the bounds temporarily increase.
This is because, for extremely high-parallel tasks (i.e. for $\mathit{pf}$ being very small), all paths in the task are very short.
In this case, all bounds, except for \textsf{PARA}, have a tendency of being close to $\frac{vol(G)}{m}$, which is another theoretical lower bound.
Note that for the computing equations, all other bounds have an item of volume being divided by $m$, such as the second item of (\ref{equ:our_bound}); but \textsf{PARA} does not have this kind of item in its computing equation, see (\ref{equ:date_bound}).

Since Fig. \ref{fig:eval_bound} only reports the average of normalized bounds, and since the performance of our bound is close to other multi-path bounds in some part of Fig. \ref{fig:eval_bound}, we select particular parts of Fig. \ref{fig:eval_bound} where \textsf{OUR} is more effective to examine the performances more closely and to show the variation of multi-path bounds.
The results are shown in Fig. \ref{fig:b_4}, where the number of cores $m=4$ and $\mathit{pf}$ is randomly chosen in $[0.2, 0.3]$.
For each box plot, 1000 DAG tasks are generated. As stated before, since \textsf{PARA} is too large in the selected settings, \textsf{PARA} is not included in Fig. \ref{fig:b_4}.
{\diff
Fig. \ref{fig:t_4} reports the time needed to compute the compared bounds. The setting of Fig. \ref{fig:t_4} is the same as Fig. \ref{fig:b_4}. Since the computation of \textsf{GRAH} needs less than 1 millisecond, \textsf{GRAH} is not included in Fig. \ref{fig:t_4}. The computing time of our bound is generally four times longer than other bounds.
Fig. \ref{fig:p_pr} shows the result of comparing our bound with the bound in \cite{zhao2020dag}, which utilizes vertex-level priority to reduce response time bound. The experiments are conducted on the number of cores $m=4, 8, 12$.
When $\mathit{pf}$ is small, meaning that the graph has more parallel vertices and fewer long paths, the priority-based bound \textsf{PRI} is more effective.
When $\mathit{pf}$ is large, meaning that the graph has fewer parallel vertices and more long paths, the multi-path bound \textsf{OUR} is more effective.
It also shows that when the number of cores is larger, the multi-path technique becomes more effective than the priority-based technique.
Fig. \ref{fig:p_ly} reports the result of comparing the same bounds as Fig. \ref{fig:p_pr}, but employing a different method to generate DAG tasks, i.e., the layer-by-layer method used in \cite{ueter2023parallel, zhao2020dag, zhao2022dag}. The observations obtained in Fig. \ref{fig:p_pr} can also be seen in Fig. \ref{fig:p_ly}. Note that when $\mathit{pf}$ in Fig. \ref{fig:p_pr} gets smaller and when the vertex number per layer in Fig. \ref{fig:p_ly} gets larger, the generated graph becomes more parallel.}
Same as \textsf{PARA} and \textsf{PATH}, experiments (not reported in the paper due to page limits) demonstrate that the performance of our bound is irrelevant to the number of vertices in the DAG task.

\subsection{Schedulability of DAG Task Sets}
\label{sec:eval_sched}

\begin{figure*}[t]
\centering
\subfloat[]{
    \includegraphics[width=0.31\linewidth]{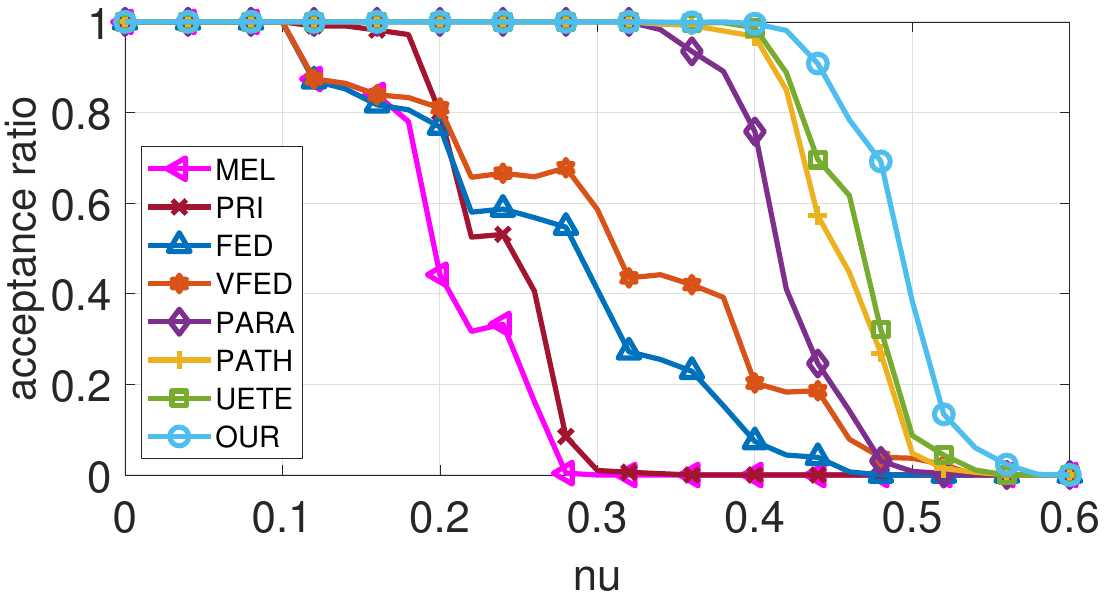}
    \label{fig:nu_16}
}
\hfil
\subfloat[]{
    \includegraphics[width=0.31\linewidth]{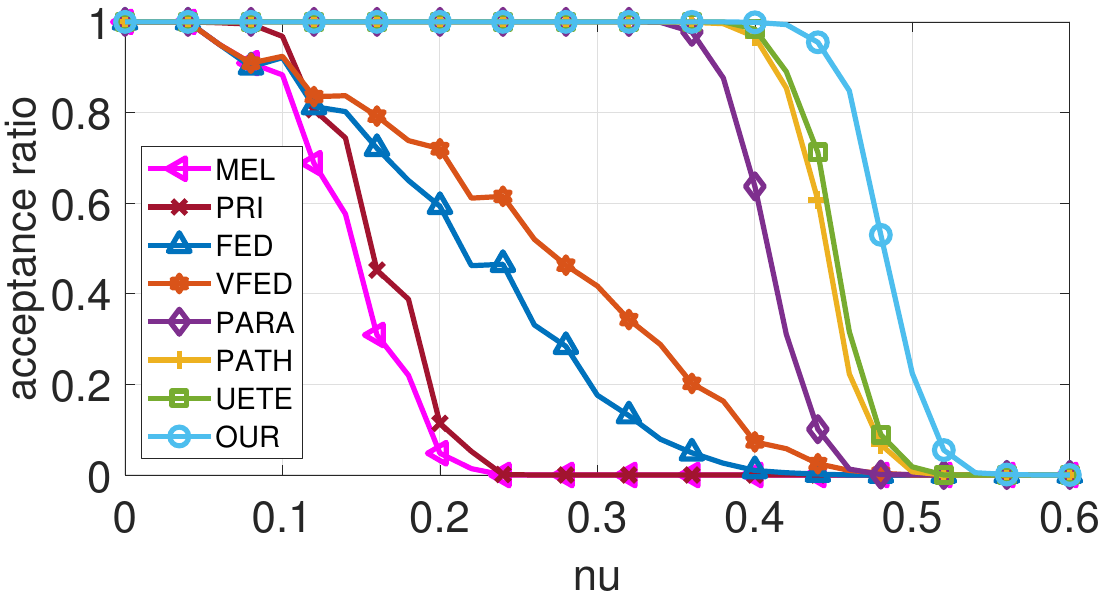}
    \label{fig:nu_32}
}
\hfil
\subfloat[]{
    \includegraphics[width=0.31\linewidth]{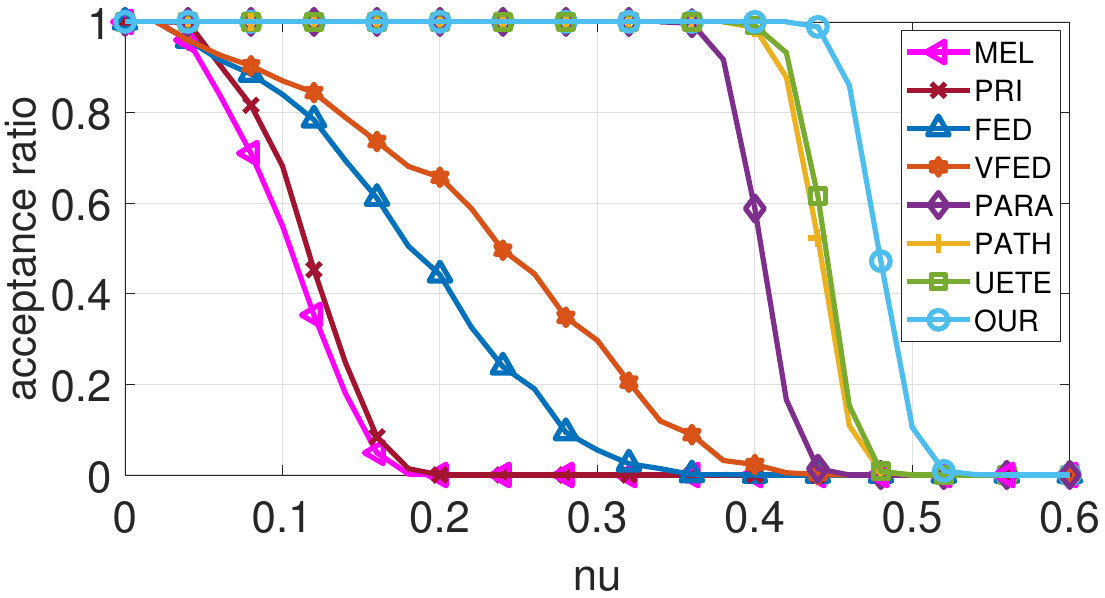}
    \label{fig:nu_64}
}
\caption{{\diff Acceptance ratio with changing the normalized utilization. (a) $m=16$, $\mathit{pf} \in [0.1, 0.6]$. (b) $m=32$, $\mathit{pf} \in [0.1, 0.6]$. (c) $m=64$, $\mathit{pf} \in [0.1, 0.6]$.}}
\label{fig:eval_sched}
\end{figure*}

This subsection evaluates the performance of scheduling DAG task sets. The following scheduling approaches are compared.
\begin{itemize}
    {\diff \item \textsf{MEL}. The global scheduling in \cite{melani2016schedulability}.
    \item \textsf{PRI}. The global scheduling in \cite{zhao2020dag} that makes use of vertex-level priorities.}
    \item \textsf{FED}. The federated scheduling in \cite{li2014analysis} based on Graham's bound in \cite{graham1969bounds}.
    \item \textsf{VFED}. The virtually federated scheduling in \cite{jiang2021virtually} based on Graham's bound in \cite{graham1969bounds}.
    \item \textsf{PARA}. The federated scheduling in \cite{he2023degree} based on the multi-path bound in \cite{he2023degree}.
    \item \textsf{PATH}. The federated scheduling in \cite{he2022bounding} based on the multi-path bound in \cite{he2022bounding}.
    \item \textsf{UETE}. The reservation-based scheduling in \cite{ueter2023parallel} based on the multi-path bound in \cite{ueter2023parallel}.
    \item \textsf{OUR}. The federated scheduling based on the proposed multi-path bound.
\end{itemize}

\noindent
\textbf{Explanation for} \textsf{OUR}\textbf{.}
Federated scheduling\footnote{Federated scheduling in \cite{li2014analysis} distinguishes heavy tasks and light tasks. This paper focuses on heavy tasks; the scheduling of light tasks is the same as \cite{li2014analysis}.} is a scheduling paradigm where each DAG task is scheduled independently on a set of dedicated cores.
When applying our bound into federated scheduling, the only extra effort is to decide the number of cores allocated to each task. This can be simply done by starting from $m=1$ and iteratively increasing $m$ until the resulting bound is no larger than the deadline of the task.
The application of our bound into federated scheduling is essentially the same as the application of the bound in \cite{he2022bounding} into federated scheduling. See \cite{he2022bounding} for more details.

\noindent
\textbf{Explanation for} \textsf{UETE}\textbf{.}
In \cite{ueter2023parallel}, two methods of computing reservations (i.e. the gang reservation and the ordinary reservation) are proposed and no schedulability tests for reservations are discussed. First, according to \cite{ueter2023parallel}, the performances of the two reservation-computing methods are almost the same. The gang reservation is used in our experiments as it is simpler than the ordinary reservation. Second, the schedulability test in \cite{dong2017analysis} is used for the gang reservations.

\noindent
\textbf{Remarks on Scheduling Algorithms.}
All the above-evaluated approaches are hierarchical scheduling and involve two levels: task-level and vertex-level. Task-level is about scheduling DAG tasks and vertex-level is about scheduling vertices in the DAG task. For vertex-level scheduling, as stated in Section \ref{sec:eval_bound}, \textsf{UETE} is more complex than other approaches. For task-level scheduling, federated scheduling in \textsf{FED}, \textsf{PARA}, \textsf{PATH} and \textsf{OUR} is very simple, as it simply assigns a dedicated set of cores to each task; the scheduling in \textsf{VFED} and \textsf{UETE} is more complex, as it requires delicate techniques to handle the sharing of cores among different reservations or servers.
Overall, the complexities of the scheduling in \textsf{FED}, \textsf{PARA}, \textsf{PATH} and \textsf{OUR} are the same and are much simpler than those of \textsf{VFED} and \textsf{UETE}.
The simplicity of scheduling algorithms is critical for the robustness of embedded real-time systems.
The only difference among \textsf{FED}, \textsf{PARA}, \textsf{PATH} and \textsf{OUR} is that the number of cores assigned to each DAG task is computed according to different response time bounds and is thus different.

\textsf{PARA} and \textsf{PATH} are the state-of-the-art approaches for scheduling DAG tasks. \textsf{UETE} is recently proposed, but it does not present a complete scheduling approach for DAG tasks as mentioned before, nor compare itself to other scheduling approaches. This paper compares them all. In the experiments, we use a standard metric called \emph{acceptance ratio} to evaluate the performances of different approaches. Acceptance ratio is the ratio between the number of schedulable task sets and the number of all evaluated task sets.

\noindent
\textbf{Task Set Generation.}
DAG tasks are generated by the same method as Section \ref{sec:eval_bound} with $c(v)$, $|V|$ and $\mathit{pf}$ randomly chosen in [5, 100], [150, 250], [0.1, 0.6], respectively.
The period $T$ (which equals the deadline $D$ in the experiment) is computed by $len(G)+\mathit{df}(vol(G)-len(G))$, where $\mathit{df}$ is a parameter. Same as the setting of \cite{he2022bounding, he2023degree}, we consider $\mathit{df}$ in [0, 0.5] to let each DAG task require at least two cores.
The number of cores $m$ is set to be 16, 32, 64.
The \emph{utilization} of a DAG task $G$ is defined to be $vol(G)/T$, and the utilization of a task set is the sum of all utilizations of tasks in this task set.
The \emph{normalized utilization} $\mathit{nu}$ of a task set is the utilization of this task set divided by the number of cores $m$.
To generate a task set with a specific utilization, we randomly generate DAG tasks and add them to the task set until the total utilization reaches the required value.
For each data point in Fig. \ref{fig:eval_sched}, we generate 1000 task sets to compute the acceptance ratio.

Fig. \ref{fig:eval_sched} reports the acceptance ratio of different approaches with changing the number of cores $m$ and the normalized utilization $\mathit{nu}$.
From the experiment results in Fig. \ref{fig:eval_sched}, we can observe that
{\diff first, federated scheduling outperforms global scheduling, which is widely observed in literature \cite{jiang2017semi,ueter2018reservation}.}
Second, the scheduling approaches based on multi-path bounds (i.e. \textsf{PARA}, \textsf{PATH}, \textsf{UETE} and \textsf{OUR}) perform significantly better than approaches based on Graham's bound (i.e. \textsf{FED} and \textsf{VFED}).
This is because multi-path bounds, which utilize the information of multiple paths to analyze the execution behavior of parallel tasks, can inherently leverage the power of multi-cores.
Third, the three existing scheduling approaches based on multi-path bounds (i.e. \textsf{PARA}, \textsf{PATH}, \textsf{UETE}) exhibit similar performances, especially for \textsf{PATH} and \textsf{UETE}, the two of which have almost the same performance.
This is consistent with the results reported in Fig. \ref{fig:eval_bound}.
Fourth, our approach, by lifting the constraint of the longest path and optimally computing the multi-path bound, advances the state-of-the-art regarding scheduling DAG tasks one step further.
The performance improvement is up to 53.2\% compared with \textsf{UETE} for $m=32$ in Fig. \ref{fig:nu_32}.

\section{Conclusion}
\label{sec:conclusion}
This paper investigates the multi-path bounds of DAG tasks.
We derive a new response time bound for a DAG task and propose an optimal algorithm to compute this bound.
We further present a systematic analysis on the dominance and the sustainability of three existing multi-path bounds, the proposed multi-path bound and Graham's bound.
Our bound theoretically dominates and empirically outperforms all existing multi-path bounds and Graham's bound.
Besides, the proposed bound is the only multi-path bound that is proved to be self-sustainable.

\section*{Acknowledgments}
This work is supported by the Research Grants Council of Hong Kong under Grant GRF 15206221 and GRF 11208522.

\bibliographystyle{IEEEtran}
\bibliography{reference}

\begin{thebibliography}{10}
\providecommand{\url}[1]{#1}
\csname url@samestyle\endcsname
\providecommand{\newblock}{\relax}
\providecommand{\bibinfo}[2]{#2}
\providecommand{\BIBentrySTDinterwordspacing}{\spaceskip=0pt\relax}
\providecommand{\BIBentryALTinterwordstretchfactor}{4}
\providecommand{\BIBentryALTinterwordspacing}{\spaceskip=\fontdimen2\font plus
\BIBentryALTinterwordstretchfactor\fontdimen3\font minus \fontdimen4\font\relax}
\providecommand{\BIBforeignlanguage}[2]{{%
\expandafter\ifx\csname l@#1\endcsname\relax
\typeout{** WARNING: IEEEtran.bst: No hyphenation pattern has been}%
\typeout{** loaded for the language `#1'. Using the pattern for}%
\typeout{** the default language instead.}%
\else
\language=\csname l@#1\endcsname
\fi
#2}}
\providecommand{\BIBdecl}{\relax}
\BIBdecl

\bibitem{verucchi2020latency}
M.~Verucchi, M.~Theile, M.~Caccamo, and M.~Bertogna, ``Latency-aware generation of single-rate {DAGs} from multi-rate task sets,'' in \emph{2020 IEEE Real-Time and Embedded Technology and Applications Symposium (RTAS)}.\hskip 1em plus 0.5em minus 0.4em\relax IEEE, 2020, pp. 226--238.

\bibitem{liu2021real}
S.~Liu, B.~Yu, and J.~Tang, ``Real-time challenges in autonomous machines,'' \url{https://2021.rtss.org/industry-session/}, (Accessed on 10/09/2023).

\bibitem{li2014analysis}
J.~Li, J.~J. Chen, K.~Agrawal, C.~Lu, C.~Gill, and A.~Saifullah, ``Analysis of federated and global scheduling for parallel real-time tasks,'' in \emph{2014 26th Euromicro Conference on Real-Time Systems}.\hskip 1em plus 0.5em minus 0.4em\relax IEEE, 2014, pp. 85--96.

\bibitem{casini2018partitioned}
D.~Casini, A.~Biondi, G.~Nelissen, and G.~Buttazzo, ``Partitioned fixed-priority scheduling of parallel tasks without preemptions,'' in \emph{2018 IEEE Real-Time Systems Symposium (RTSS)}.\hskip 1em plus 0.5em minus 0.4em\relax IEEE, 2018, pp. 421--433.

\bibitem{nasri2019response}
M.~Nasri, G.~Nelissen, and B.~B. Brandenburg, ``Response-time analysis of limited-preemptive parallel {DAG} tasks under global scheduling,'' in \emph{31st Conference on Real-Time Systems}, 2019, pp. 21--1.

\bibitem{voudouris2021federated}
P.~Voudouris, P.~Stenstr{\"o}m, and R.~Pathan, ``Federated scheduling of sporadic {DAGs} on unrelated multiprocessors,'' \emph{ACM Transactions on Embedded Computing Systems (TECS)}, vol.~20, no.~5s, pp. 1--25, 2021.

\bibitem{osborne2022minimizing}
S.~H. Osborne, J.~Bakita, J.~Chen, T.~Yandrofski, and J.~H. Anderson, ``Minimizing {DAG} utilization by exploiting {SMT},'' in \emph{2022 IEEE 28th Real-Time and Embedded Technology and Applications Symposium (RTAS)}.\hskip 1em plus 0.5em minus 0.4em\relax IEEE, 2022, pp. 267--280.

\bibitem{dai2022response}
G.~Dai, M.~Mohaqeqi, P.~Voudouris, and W.~Yi, ``Response-time analysis of limited-preemptive sporadic {DAG} tasks,'' \emph{IEEE Transactions on Computer-Aided Design of Integrated Circuits and Systems}, vol.~41, no.~11, pp. 3673--3684, 2022.

\bibitem{baruah2022ilp}
S.~Baruah, ``An {ILP} representation of a {DAG} scheduling problem,'' \emph{Real-Time Systems}, vol.~58, no.~1, pp. 85--102, 2022.

\bibitem{laplante2004real}
P.~A. Laplante \emph{et~al.}, \emph{Real-time systems design and analysis}.\hskip 1em plus 0.5em minus 0.4em\relax Wiley New York, 2004.

\bibitem{graham1969bounds}
R.~L. Graham, ``Bounds on multiprocessing timing anomalies,'' \emph{SIAM journal on Applied Mathematics}, vol.~17, no.~2, pp. 416--429, 1969.

\bibitem{melani2015response}
A.~Melani, M.~Bertogna, V.~Bonifaci, A.~Marchetti-Spaccamela, and G.~C. Buttazzo, ``Response-time analysis of conditional {DAG} tasks in multiprocessor systems,'' in \emph{2015 27th Euromicro Conference on Real-Time Systems}.\hskip 1em plus 0.5em minus 0.4em\relax IEEE, 2015, pp. 211--221.

\bibitem{sun2021calculating}
J.~Sun, N.~Guan, Z.~Guo, Y.~Xue, J.~He, and G.~Tan, ``Calculating worst-case response time bounds for {OpenMP} programs with loop structures,'' in \emph{2021 IEEE Real-Time Systems Symposium (RTSS)}.\hskip 1em plus 0.5em minus 0.4em\relax IEEE, 2021, pp. 123--135.

\bibitem{sun2017real}
J.~Sun, N.~Guan, Y.~Wang, Q.~He, and W.~Yi, ``Real-time scheduling and analysis of {OpenMP} task systems with tied tasks,'' in \emph{2017 IEEE Real-Time Systems Symposium (RTSS)}.\hskip 1em plus 0.5em minus 0.4em\relax IEEE, 2017, pp. 92--103.

\bibitem{jiang2017semi}
X.~Jiang, N.~Guan, X.~Long, and W.~Yi, ``Semi-federated scheduling of parallel real-time tasks on multiprocessors,'' in \emph{2017 IEEE Real-Time Systems Symposium (RTSS)}.\hskip 1em plus 0.5em minus 0.4em\relax IEEE, 2017, pp. 80--91.

\bibitem{han2019response}
M.~Han, N.~Guan, J.~Sun, Q.~He, Q.~Deng, and W.~Liu, ``Response time bounds for typed {DAG} parallel tasks on heterogeneous multi-cores,'' \emph{IEEE Transactions on Parallel and Distributed Systems}, vol.~30, no.~11, pp. 2567--2581, 2019.

\bibitem{he2022bounding}
Q.~He, N.~Guan, M.~Lv, X.~Jiang, and W.~Chang, ``Bounding the response time of {DAG} tasks using long paths,'' in \emph{2022 IEEE Real-Time Systems Symposium (RTSS)}.\hskip 1em plus 0.5em minus 0.4em\relax IEEE, 2022, pp. 474--486.

\bibitem{he2023degree}
Q.~He, N.~Guan, M.~Lv, and Z.~Gu, ``On the degree of parallelism in real-time scheduling of {DAG} tasks,'' in \emph{2023 Design, Automation \& Test in Europe Conference \& Exhibition (DATE)}.\hskip 1em plus 0.5em minus 0.4em\relax IEEE, 2023, pp. 1--6.

\bibitem{ueter2023parallel}
N.~Ueter, M.~G{\"u}nzel, G.~von~der Br{\"u}ggen, and J.-J. Chen, ``Parallel path progression {DAG} scheduling,'' \emph{IEEE Transactions on Computers}, vol.~72, no.~10, pp. 3002--3016, 2023.

\bibitem{ahuja1995network}
R.~K. Ahuja, T.~L. Magnanti, and J.~B. Orlin, \emph{Network flows: theory, algorithms and applications}.\hskip 1em plus 0.5em minus 0.4em\relax Prentice Hall, 1995.

\bibitem{baker2009sustainable}
T.~P. Baker and S.~K. Baruah, ``Sustainable multiprocessor scheduling of sporadic task systems,'' in \emph{2009 21st Euromicro Conference on Real-Time Systems}.\hskip 1em plus 0.5em minus 0.4em\relax IEEE, 2009, pp. 141--150.

\bibitem{yi2019design}
W.~Yi, ``Design and dynamic update of real-time systems,'' in \emph{2019 IEEE Real-Time Systems Symposium (RTSS)}.\hskip 1em plus 0.5em minus 0.4em\relax IEEE, 2019, pp. 1--3.

\bibitem{serrano2015timing}
M.~A. Serrano, A.~Melani, R.~Vargas, A.~Marongiu, M.~Bertogna, and E.~Quinones, ``Timing characterization of {OpenMP4} tasking model,'' in \emph{2015 International Conference on Compilers, Architecture and Synthesis for Embedded Systems (CASES)}.\hskip 1em plus 0.5em minus 0.4em\relax IEEE, 2015, pp. 157--166.

\bibitem{liang2023response}
H.~Liang, X.~Jiang, N.~Guan, Q.~He, and W.~Yi, ``Response time analysis and optimization of {DAG} tasks exploiting mutually exclusive execution,'' in \emph{2023 60th ACM/IEEE Design Automation Conference (DAC)}.\hskip 1em plus 0.5em minus 0.4em\relax IEEE, 2023, pp. 1--6.

\bibitem{lin2022type}
C.-C. Lin, J.~Shi, N.~Ueter, M.~G{\"u}nzel, J.~Reineke, and J.-J. Chen, ``Type-aware federated scheduling for typed {DAG} tasks on heterogeneous multicore platforms,'' \emph{IEEE Transactions on Computers}, vol.~72, no.~5, pp. 1286--1300, 2022.

\bibitem{voudouris2021bounding}
P.~Voudouris, P.~Stenstr{\"o}m, and R.~Pathan, ``Bounding the execution time of parallel applications on unrelated multiprocessors,'' \emph{Real-Time Systems}, pp. 1--44, 2021.

\bibitem{he2019intra}
Q.~He, X.~Jiang, N.~Guan, and Z.~Guo, ``Intra-task priority assignment in real-time scheduling of {DAG} tasks on multi-cores,'' \emph{IEEE Transactions on Parallel and Distributed Systems}, vol.~30, no.~10, pp. 2283--2295, 2019.

\bibitem{zhao2022dag}
S.~Zhao, X.~Dai, and I.~Bate, ``{DAG} scheduling and analysis on multi-core systems by modelling parallelism and dependency,'' \emph{IEEE Transactions on Parallel and Distributed Systems}, vol.~33, no.~12, pp. 4019--4038, 2022.

\bibitem{he2024longer}
Q.~He, N.~Guan, and M.~Lv, ``Longer is shorter: Making long paths to improve the worst-case response time of {DAG} tasks,'' \emph{IEEE Transactions on Computer-Aided Design of Integrated Circuits and Systems}, vol.~43, no.~12, pp. 4519--4531, 2024.

\bibitem{voudouris2017timing}
P.~Voudouris, P.~Stenstr{\"o}m, and R.~Pathan, ``Timing-anomaly free dynamic scheduling of task-based parallel applications,'' in \emph{Real-Time and Embedded Technology and Applications Symposium (RTAS), 2017 IEEE}.\hskip 1em plus 0.5em minus 0.4em\relax IEEE, 2017, pp. 365--376.

\bibitem{chen2019timing}
P.~Chen, W.~Liu, X.~Jiang, Q.~He, and N.~Guan, ``Timing-anomaly free dynamic scheduling of conditional {DAG} tasks on multi-core systems,'' \emph{ACM Transactions on Embedded Computing Systems (TECS)}, vol.~18, no.~5s, pp. 1--19, 2019.

\bibitem{gratzer2002general}
G.~Gr{\"a}tzer, \emph{General lattice theory}.\hskip 1em plus 0.5em minus 0.4em\relax Springer Science \& Business Media, 2002.

\bibitem{hopcroft1973n}
J.~E. Hopcroft and R.~M. Karp, ``An n\^{}5/2 algorithm for maximum matchings in bipartite graphs,'' \emph{SIAM Journal on computing}, vol.~2, no.~4, pp. 225--231, 1973.

\bibitem{garey1979computers}
M.~R. Garey and D.~S. Johnson, \emph{Computers and Intractability: A Guide to the Theory of {NP}-Completeness}.\hskip 1em plus 0.5em minus 0.4em\relax W. H. Freeman and Company, 1979.

\bibitem{baruah2006sustainable}
S.~Baruah and A.~Burns, ``Sustainable scheduling analysis,'' in \emph{2006 27th IEEE International Real-Time Systems Symposium (RTSS'06)}.\hskip 1em plus 0.5em minus 0.4em\relax IEEE, 2006, pp. 159--168.

\bibitem{yi2022mimos}
W.~Yi, M.~Mohaqeqi, and S.~Graf, ``Mimos: A deterministic model for the design and update of real-time systems,'' in \emph{International Conference on Coordination Languages and Models}.\hskip 1em plus 0.5em minus 0.4em\relax Springer, 2022, pp. 17--34.

\bibitem{zhao2020dag}
S.~Zhao, X.~Dai, I.~Bate, A.~Burns, and W.~Chang, ``{DAG} scheduling and analysis on multiprocessor systems: Exploitation of parallelism and dependency,'' in \emph{2020 IEEE Real-Time Systems Symposium (RTSS)}.\hskip 1em plus 0.5em minus 0.4em\relax IEEE, 2020, pp. 128--140.

\bibitem{cordeiro2010random}
D.~Cordeiro, G.~Mouni{\'e}, S.~Perarnau, D.~Trystram, J.-M. Vincent, and F.~Wagner, ``Random graph generation for scheduling simulations,'' in \emph{Proceedings of the 3rd international ICST conference on simulation tools and techniques}.\hskip 1em plus 0.5em minus 0.4em\relax ICST, 2010, p.~60.

\bibitem{melani2016schedulability}
A.~Melani, M.~Bertogna, V.~Bonifaci, A.~Marchetti-Spaccamela, and G.~Buttazzo, ``Schedulability analysis of conditional parallel task graphs in multicore systems,'' \emph{IEEE Transactions on Computers}, vol.~66, no.~2, pp. 339--353, 2016.

\bibitem{jiang2021virtually}
X.~Jiang, N.~Guan, H.~Liang, Y.~Tang, L.~Qiao, and W.~Yi, ``Virtually-federated scheduling of parallel real-time tasks,'' in \emph{2021 IEEE Real-Time Systems Symposium (RTSS)}.\hskip 1em plus 0.5em minus 0.4em\relax IEEE, 2021, pp. 482--494.

\bibitem{dong2017analysis}
Z.~Dong and C.~Liu, ``Analysis techniques for supporting hard real-time sporadic gang task systems,'' in \emph{2017 IEEE Real-Time Systems Symposium (RTSS)}.\hskip 1em plus 0.5em minus 0.4em\relax IEEE, 2017, pp. 128--138.

\bibitem{ueter2018reservation}
N.~Ueter, G.~Von Der~Br{\"u}ggen, J.-J. Chen, J.~Li, and K.~Agrawal, ``Reservation-based federated scheduling for parallel real-time tasks,'' in \emph{2018 IEEE Real-Time Systems Symposium (RTSS)}.\hskip 1em plus 0.5em minus 0.4em\relax IEEE, 2018, pp. 482--494.

\end{thebibliography}

\begin{IEEEbiography}[{\includegraphics[width=1in,height=1.25in,clip,keepaspectratio]{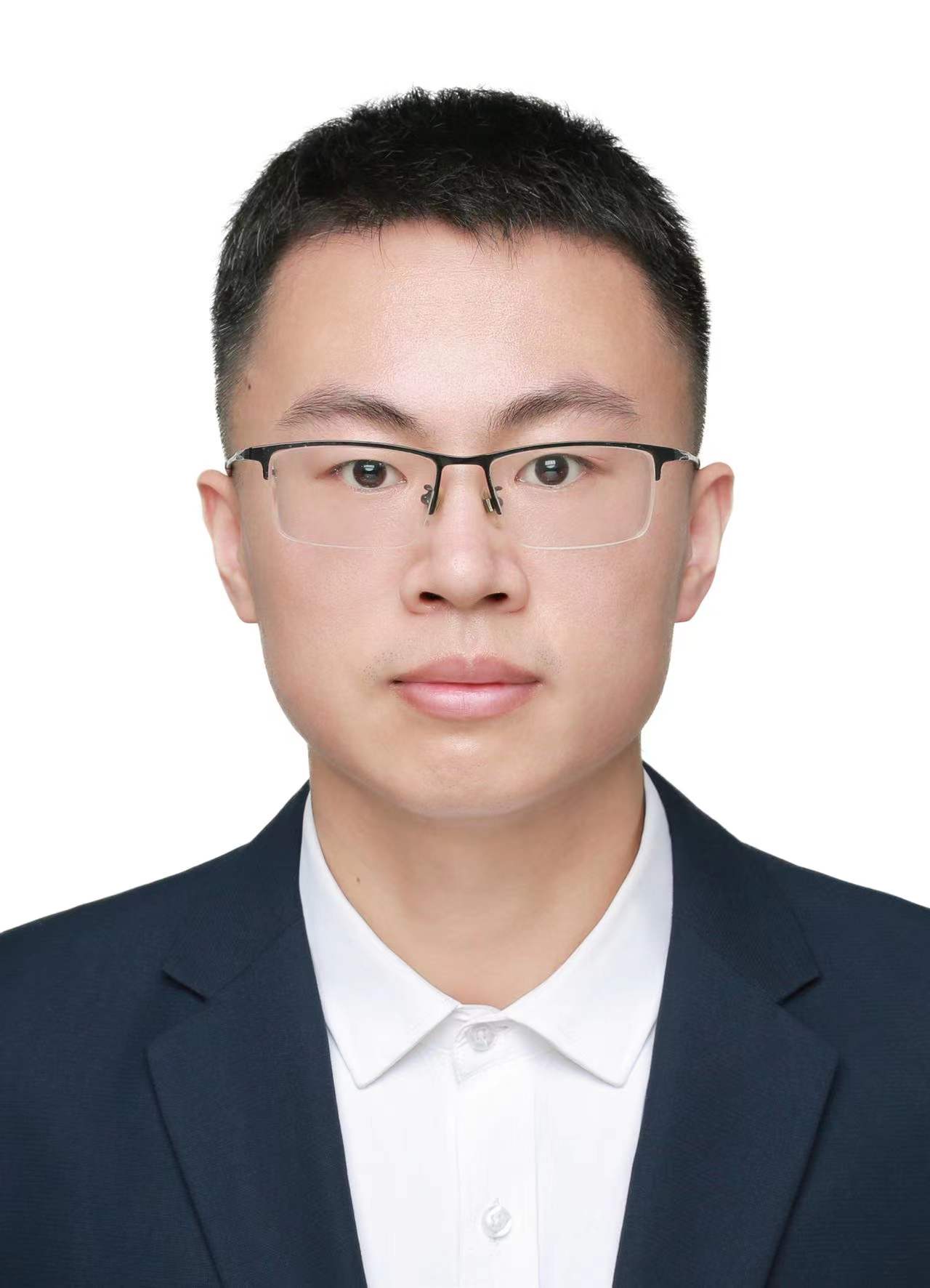}}]{Qingqiang He} is an assistant professor at Great Bay University, China. He received the Ph.D. degree in computer science from The Hong Kong Polytechnic University in 2023. His research interests include real-time scheduling theory and embedded real-time systems. He is a recipient of the Outstanding Paper Award of IEEE Real-Time Systems Symposium (RTSS) in 2022.
\end{IEEEbiography}

\begin{IEEEbiography}[{\includegraphics[width=1in,height=1.25in,clip,keepaspectratio]{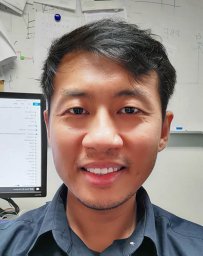}}]{Nan Guan}
is currently an associate professor at the Department of Computer Science, City University of Hong Kong. Dr. Guan received his BE and MS from Northeastern University, China in 2003 and 2006, respectively, and a Ph.D. from Uppsala University, Sweden in 2013. Before joining CityU, he worked in The Hong Kong Polytechnic University and Northeastern University, China. His research interests include real-time embedded systems and cyber-physical systems. 
\end{IEEEbiography}

\begin{IEEEbiography}[{\includegraphics[width=1in,height=1.25in,clip,keepaspectratio]{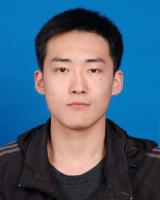}}]{Shuai Zhao}
is an associate professor at the Sun Yat-Sen University, China. He received a Ph.D. degree in computer science from the University of York in 2018. His research interests include scheduling algorithms, multiprocessor resource sharing, schedulability analysis, and safety-critical programming languages.
\end{IEEEbiography}

\begin{IEEEbiography}[{\includegraphics[width=1in,height=1.25in,clip,keepaspectratio]{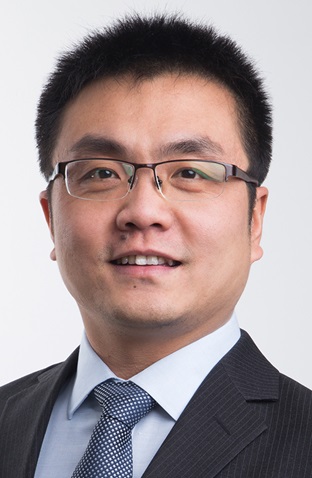}}]{Mingsong Lv}
received his Ph.D. degree in computer science from Northeastern University, China, in 2010. He is currently with the Hong Kong Polytechnic University. His research interests include timing analysis of real-time systems and intermittent computing.
\end{IEEEbiography}

\includepdfmerge{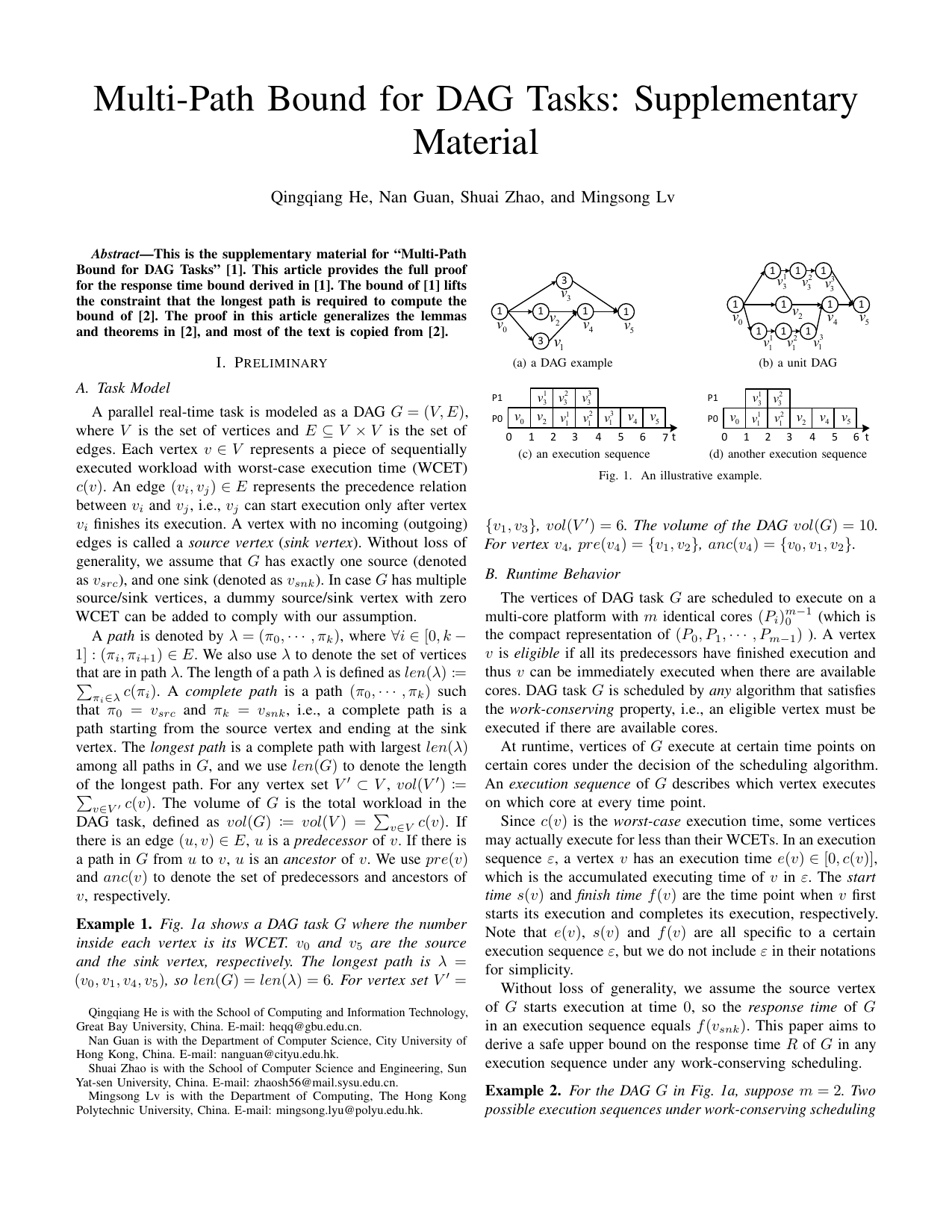,1-8}

\end{document}